\title{\textbf{A new look at the theory of  point interactions}}
\author{Rodolfo Figari\
\\
Gran Sasso Science Institute, {\it Viale Francesco Crispi, 7 -  67100 LAquila (AQ)}  \\
rodolfo.figari@protonmail.com 
\\
\\
Hamidreza Saberbaghi
\\
Gran Sasso Science Institute, {\it Viale Francesco Crispi, 7 -  67100 LAquila (AQ)} \\
hamidreza.saberbaghi@gssi.it
\\
\\
Alessandro Teta\
\\
Sapienza Universit\`a di Roma, {\it Piazzale Aldo Moro, 5 - 00185 (RM), Italy}\\
teta@mat.uniroma1.it\\}
\date{}
\newtheorem{theorem}{Theorem}[section]
\newtheorem{proposition}{Proposition}[section]
\newtheorem{lemma}[theorem]{Lemma}
\newtheorem{remark}{Remark}
\newcommand{\be}{\begin{equation}}
\newcommand{\ee}{\end{equation}}
\newcommand{\rr}{\mathbb{R}}
\newcommand{\cc}{\mathbb{C}}
\newcommand{\real}{\operatorname{Re}}
\newcommand{\imag}{\operatorname{Im}}
\newcommand{\n}{\noindent}
\newcommand{\vs}{\vspace{0.5cm}}
\crefname{appendix}{Appendix}{Appendices}
\begin{document}
\selectlanguage{english}
\maketitle

\begin{abstract} 
We investigate the entire family of multi-center point interaction Hamiltonians. We show that a large sub-family of these operators does not become either singular or trivial when the positions of two or more scattering centers tend to coincide. In this sense, they appear to be renormalized by default as opposed to the "local" point interaction Hamiltonians usually considered in the literature as the ones of physical interest. In the two-center case, we study the behaviour of the negative eigenvalues as a function of the center distance. The result is used to analyze a formal Born-Oppenheimer approximation of a three-particle system with two heavy and one light particle. We show that this simplified model does not show any ultraviolet catastrophe, and we prove that the ratio of successive low energy eigenvalues  follows the Efimov geometrical law.

\end{abstract}

\section{Introduction}
Multi-center zero-range potentials have been extensively used in atomic physics as models of interaction between charged or neutral particles and arrays of atoms (see e.g. \cite{dem}). Their formal definition has been given through singular boundary conditions around each point scatterer, generalizing the single-point  delta interaction (also referred to as zero-range, or contact interaction). A rigorous definition of quantum Hamiltonians with singular potentials supported on thin sets (e.g. discrete sets of points or manifolds of dimensions strictly lower than the configuration space) was only given in the second half of the last century. The definition was based on the theory of self-adjoint extensions of symmetric operators, of which various alternative versions have since been given. 
Despite the generality of the theory, physical intuition led to the investigation of subsets of extensions connected to particular boundary conditions. This paper deals with some pathological aspects of these last mentioned multi-center point interaction Hamiltonians.
In fact, such pathologies seem to reverberate 
in ultraviolet singularities emerging in the theory of many quantum particles interacting via zero-range forces. \\
The standard definition of multi-center point interaction Hamiltonians is based on the assumption that functions in the Hamiltonian domain should satisfy (singular) boundary conditions on each interaction point, which do not depend on the position of the other ones. For this reason they are referred to as local point interaction Hamiltonians. It was recognized (see e.g. \cite{Albeverio}, \cite{Fassari}, \cite{Mosta}) that the assumption resulted in a non-additive property of point interactions. In particular, a local multi-center point interaction with fixed strength parameters acts as  sum of single point interactions only when points are well spaced, but produces  ground states of lower and lower energy when two or more points get closer and closer.\footnote{As an example, in \cite{FHT} it was proved that any Schr\"{o}dinger Hamiltonian with a $L^1 (\rr^3)$ potential was the limit, in the resolvent sense, of a sequence of Hamiltonians with an increasing number of rescaled local point interactions. The limit potential turned out to be the asymptotic scattering length density of the point scatterers. However the result could be proved only for distribution of points avoiding close contact among them}\\
As it is well known, the same ultraviolet problem arises in the theory of three or more non-relativistic bosons interacting via contact interactions (\cite{TH},\cite{mf1},\cite{mf2}).  Moreover, a formal application of the Born-Oppenheimer approximation to the three-particle case, which makes use of local point interactions between the light particle and the heavy ones, shows the same ultraviolet catastrophe.\\
In the following sections, we recall the characterization given in \cite{DG} of all multi-center point interactions and we show that almost all of them have additive properties for any relative positions of the point scatterers. Read in terms of the theory of local interactions, the regularization appears as a  renormalization of the boundary condition of the kind suggested in \cite{mf1},\cite{AHKW} for the three-particle case, which have been recently made rigorous by various authors (\cite{FiT},\cite{Miche},\cite{BCFT},\cite{FeT}) to solve the stability problem.\\
In the last section we analyze thoroughly the general two center point interaction Hamiltonians and then we use the results to investigate the  Born-Oppenheimer approximation of the three-particle problem.  We prove that no ultraviolet catastrophe takes place and there are interactions with infinite scattering length (unitary limit) providing effective Hamiltonians  with an infinite number of Efimov states corresponding to non-degenerate eigenvalues accumulating to the continuum threshold. We show that the number of Efimov states becomes finite and decreases as the interaction parameters move away from the one corresponding to the unitary limit. In \cite{fons}, Fonseca et al. obtained similar results using separable potentials. We investigate the same subject within the realm of zero-range interactions.

\section{Point interactions in $\mathbb{R}^3$}

One can find several books and articles that describe exhaustively the history of zero-range potentials in Quantum Mechanics and other fields of Mathematical Physics. We only mention the introduction of the book \cite{Albeverio} due to the rich list of bibliographical references  it contains. \\
We will not add another attempt in this sense. In the following,  we merely want to briefly recall the different definitions given in the past of these interactions.

\subsection{The one center point interaction in $\rr^3$}
Below we list some of the ways used in the literature to define a point interaction Hamiltonian in $\rr^3$.
\begin{itemize}
\item In time independent scattering theory for a central potential it is proved that the scattering cross section for S-wave scattering is given by $\tfrac{4 \pi}{k^2} \sin^2 \delta_{0}(k)$, 
where $\delta_{0}$ is the S-wave phase shift in the expansion of the scattering amplitude. A non-trivial scattering cross section in the small energy limit is then characterized by the fact that the scattering length 
\[
\displaystyle{a} :=  -\lim_{k \to 0} \frac{\sin \delta_{0}(k)} {k}
\]
 results different from zero.\\
On the other end, the generalized eigenfunctions of any Schr\"odinger Hamiltonian with central short-range potential should behave in the region where the potential is zero as 
\[
\Psi (r) = q \frac{\sin (kr + \delta_{0})}{r}\]
where $q \in \cc$ . If the region where the potential is zero extends to $\rr^3\setminus\{0\}$ one immediately checks that 
\begin{equation} \label{BC}
\lim_{k\rightarrow 0} k \cot \delta_0(k) =-\frac{1}{a} = \left[\frac{d}{dr} (\log (r \Psi))\right]_{r=0} .
\end{equation}
The "boundary condition" (\ref{BC}) together with the definition of scattering length uniquely characterize   the behaviour of the wave functions (in particular of eigenfunctions  and generalized eigenfunctions) close to the point scatterer up to the zeroth order in $r$
\[ \psi  \sim \frac{q}{r} -  \frac{q}{a} \quad \,\,\,\,\,\,\,\,\, \mbox{for} \,\,\,\,\, r \rightarrow 0\,.\]
Boundary condition (\ref{BC}) was taken  as definition of a single point interaction of scattering length $a$ placed in the origin. For the relation of this definition with the "Fermi pseudopotential" (see, e.g., \cite{Albeverio}, notes to the chapter one.)

\item
Within the general theory of perturbations of the Laplacian  supported on "small" sets, the Hamiltonian with a zero-range potential in the origin is defined as any self-adjoint extension of the symmetric operator $\dot{H}_0$ one obtains restricting the free Hamiltonian $ - \Delta $ to  $C^{\infty}_{0}(\rr^{3}\setminus \{0\})$. In this simple case the defect spaces $N_{z}$ of $\dot{H}_0$ for $z \in \cc\setminus \rr^+$  are one dimensional. In fact, the only eigenfunction in $L^2(\rr^3)$ relative to the eigenvalue $z$ of the adjoint operator $\dot{H}^{\star}_0$ is the function
\begin{equation} \label{Green's}
G_z =  \frac{e^{-\sqrt{-z}|x|}}{ 4 \pi |x|} \,\,\,\,\,\,\,  \,\,\,\,\,\, \real \sqrt {-z } \geq 0 \,.
\end{equation}
According to the von Neumann construction, each self-adjoint extension can then be parametrized by the phase $\phi \in [0 , 2\pi)$  associated with the unitary operator $U_\phi$ from $N_{i}$ to $N_{-i}$ ($U_\phi \psi = e^{i \phi} \psi \,,\,\,\,\ \forall \psi \in N_{i}$).

\item
Dirichlet forms and Non-Standard Analysis provided alternative definitions of point interaction Hamiltonians which were particularly important in the theory of stochastic processes and stochastic fields associated to quantum mechanical systems.  We refer to the appendices F and H of \cite{Albeverio} for details.
 
\end{itemize}

In the case of one center    $y \in \rr^3$ all the strategies mentioned above brings to alternative characterisations of the same family of self-adjoint Hamiltonians whose properties are listed here below. \\
The family $H_{\alpha,y}$ is indexed by a real parameter $\alpha$ connected with the scattering length $a$ and with the unitary operator $e^{i\phi}$ in the von Neumann construction  via the formula
\begin{equation*}
    \alpha = - \frac{1}{4 \pi a} = \frac{1}{4 \pi} \frac{1}{\sqrt{2}}  \left(  \tan\frac{\phi}{2} - 1 \right).
\end{equation*}
Domain and action of the Hamiltonian $H_{\alpha,y}$  are the following
\begin{equation}\label{doH}
D(H_{\alpha,y}) = \left\{u \in L^{2}(\rr^{3})| u = \phi^{\lambda} + q 
G^{\lambda}(\cdot -y),\;\; \phi^{\lambda} \in H^{2}(\rr^{3}), \,\,\, q \in \cc, \,
\;\; \phi^{\lambda}(y) = \left(\alpha + \frac{\sqrt{\lambda}}{4 \pi} \right) q
\right\} 
\end{equation}
\begin{equation}
(H_{\alpha,y}+\lambda) u = (-\Delta + \lambda) \phi^{\lambda}
\end{equation}
for $\lambda >0$, $\lambda \neq 16 \pi^{2} \alpha^{2}$ when $\alpha < 0$ and 
\[ 
\displaystyle G^{\lambda} (x) = \frac{e^{-\sqrt{\lambda}|x|}}{ 4 \pi |x|}  
\] 
is the Green's function of $(-\Delta + \lambda), $ where $ G^{\lambda}  =  G_{z =-\lambda}$. 
In \cite{Albeverio}  the following properties of $H_{\alpha,y}$ are proved.\\
- The resolvent $(H_{\alpha,y} -z)^{-1}$  for $z =-\lambda$, with $\lambda >0$, $\lambda \neq 16 \pi^{2} \alpha^{2}$ when $\alpha < 0$, reads

\begin{equation}\label{re1}
[(H_{\alpha,y}+\lambda)^{-1} f](x) = (G^{\lambda}f)(x) +  
\frac{1}{\alpha + \sqrt{\lambda}/4 \pi}G^{\lambda}(x-y) 
(G^{\lambda}f)(y)
\end{equation}
where, with an abuse of notation, we indicated with $G^{\lambda}f$  the convolution of the function $G^\lambda$ with a function $f \in L^2(\rr^3)$.  
Notice that the free Hamiltonian $-\Delta$ is obtained in the limit $\alpha \rightarrow \pm \infty$.

- the spectrum of $H_{\alpha, y}$ is
\[ \sigma (H_{\alpha, y}) = [ 0,\infty) \;\;\;\; \alpha \geq 0 \]
\[\sigma (H_{\alpha, y}) = \left\{-16 \pi^{2} \alpha^{2}\right\} \cup[ 0,\infty) 
\;\;\;\; \alpha < 0 
\]

where the continuous part $[0,\infty)$ is purely absolutely continuous.

- for $\alpha < 0$ the only eigenvalue is simple and the corresponding 
normalized eigenfunction is
\[
\psi_{\alpha}(x) = \sqrt{- 2 \alpha} \frac{\exp(4 \pi \alpha |x|)}{|x|}
\]

- for any $\alpha \in \rr$,  for each $k \in \rr^3$  the  function 

\begin{equation}
\psi_{\alpha}(k,x) = \frac{1}{(2 \pi)^{3/2}} \left(\exp(ik \cdot x) 
-\frac{1}{\alpha- i |k|/ (4 \pi)} \frac{\exp(i|k| |x|)}{|x|}\right)
\end{equation}
is a generalized eigenfunction of $H_{\alpha, y}$ relative to the energy  $ E = |k|^{2} $.

- The scattering length $a=-1/4\pi \alpha$ associated to the operator $H_{\alpha,y}$ is negative for $\alpha$ positive and viceversa. In particular the scattering length is infinite for $\alpha=0$.

- In the origin functions $u$ in  the domain satisfy a boundary
condition expressed by the last equality in (\ref{doH}). 
If we define $r=|x|$ it is easy to see that the boundary condition
can be equivalently written
\begin{equation}
\lim_{r \rightarrow 0} \left[ \frac{\partial(ru)}{\partial r} - 4 \pi
\alpha (r u) \right]=0, \;\;\;\;
\label{bc1}
\end{equation}
which coincides with \eqref{BC}.\\ 
In summary, the one-center point interaction is a simple and versatile model of a single particle quantum dynamics. In particular for $\alpha = 0$ the interaction has zero effective range and infinite scattering length. We will come back to this important feature of point interaction Hamiltonians.

\subsection{The ``local" many-center point interaction Hamiltonians in $\rr^3$}
The quantum dynamics of one quantum particle in $\rr^3$ in presence of $n$ point scatterers  was first analysed in atomic physics using formal definitions of the Hamiltonians given in terms of boundary conditions generalizing (\ref{BC}).  In chapter II.1 of  \cite{Albeverio} those definitions were made precise and their properties rigorously proved. We give here a brief summary of the subject. \\
For any $\underline{\alpha} = \{\alpha_{1}, \ldots , \alpha_{n}\}$, 
with $\alpha_{i} \in \rr, \;\;i = 1,\ldots,n$, and $\underline{y} = \{y_{1}, 
\ldots , y_{n}\}, \;\;y_{i}\in \rr^{3}, \;\;i =1,\ldots,n$, the operator 
$H_{\underline{\alpha}, \underline{y}}$ defined by 
\begin{equation}\label{domop3}
    \begin{split}
        D(H_{\underline{\alpha},\underline{y}})= &\left\{u \in L^{2}(\rr^{3}) 
\,|\, u =\phi^{\lambda} + \sum_{k=1}^{n} q_{k} 
G^{\lambda}(\cdot-y_{k}),\;  \phi^{\lambda} \in H^{2}(\rr^{3}), \right. \\
    & \left. \,\,\,\, q_k \in \cc,\; \phi^{\lambda}(y_j) = \sum_{k=1}^{n} [\Gamma_{\underline{\alpha},\underline{y}}(\lambda)]_{jk} \;q_{k}, 
\; j=1,...,n \right\}
    \end{split}
\end{equation}
\begin{equation}
(H_{\underline{\alpha},\underline{y}} + \lambda)u= (-\Delta + \lambda)
\phi^{\lambda}
\label{aziop3}
\end{equation}
for  $q_i \in \cc$, $\lambda >0$  and   
\begin{equation}
[\Gamma_{\underline{\alpha},\underline{y}}(\lambda)]_{jk}= \left( \alpha_j  +
\frac{\sqrt{\lambda}}{4 \pi} \right) \delta_{jk} - G^{\lambda}(y_j - y_k)(1 -
\delta_{jk}),
\label{gamma3}
\end{equation}
is a self-adjoint extension of $\dot{H}_0 = -\Delta \restriction  C^{\infty}_{0}(\rr^{3} \setminus \{y_{1},\ldots,y_{n}\})$. Such extension is referred to as the "local"  
point interaction Hamiltonian with $n$ centers of strength $\alpha_i$ located in the points $y_i$.
Notice that for any smooth function $u \in D(H_{\underline{\alpha},\underline{y}})$ 
vanishing at each $y_i$, one has $q_i =0\,\, \forall i$  and then, from (\ref{aziop3}), $H_{\alpha,y}u= -\Delta u$.\\ 
At each point $y_j$ function $u$ in  the domain satisfy a boundary
condition expressed by the last equality in (\ref{domop3}). 
If we define $r_j=|x-y_j|$ it is easy to see that the boundary condition
can be equivalently written as 
\begin{equation}
\lim_{r_j \rightarrow 0} \left[ \frac{\partial(r_j u)}{\partial r_j} - 4 \pi
\alpha_j (r_j u) \right]=0, \;\;\;\;j=1,...,n\,
\label{bc}
\end{equation}
which is a direct generalization of \eqref{bc1} to the  multi-center case. \\
For $z =-\lambda$, with $\lambda >0$ sufficiently large so that $\det \Gamma_{\underline{\alpha},\underline{y}}(\lambda)\neq 0$, the resolvent $(H_{\underline{\alpha},\underline{y}} + \lambda)^{-1}$ reads
\begin{eqnarray}
&&[(H_{\underline{\alpha},\underline{y}} + \lambda)^{-1}f](x) = (G^{\lambda} f)(x)+
\sum_{j,k=1}^{n} [\Gamma_{\underline{\alpha},\underline{y}}(\lambda)^{-1}]_{jk}
G^{\lambda}(x-y_j) (G^{\lambda}f)(y_k).
\label{resol3}
\end{eqnarray}
for $f \in L^2(\rr^3)$.
A thorough investigation  of (\ref{resol3}) gives the following characterization of the spectral properties. \\
- The continuous spectrum of $H_{\underline{\alpha},\underline{y}}$ is purely absolutely continuous and coincides with the positive real axis, \\ 
- the point spectrum consists of (at most) $n$ non-positive eigenvalues given by the possible solutions $E \leq 0$ of the
equation $\det \; [\Gamma_{\underline{\alpha},\underline{y}}(-E)]=0$. \\

Notice that the set of Hamiltonians $H_{\underline{\alpha},\underline{y}}$ 
 is a strict subset of the set of all the self-adjoint extensions of $ \dot{H}_0 = -\Delta \restriction C^{\infty}_{0}(\rr^{3} \setminus \{y_{1},\ldots,y_{n}\})$. Indeed, the dimension of the defect spaces $N_{z}$ for $\imag(z) \neq 0$ in this case is $n$. As a consequence, $n^{2}$ real parameters are required to identify all unitary operators from  $N_{z}$ to $N_{\bar{z}}$. The ``missing ones" were denoted ``non-local'', meaning that the functions belonging to the corresponding Hamiltonian domains obey boundary conditions around each point depending on the positions  of the other points.\\
Few remarks are worth at this point. To simplify notation we will examine only cases in which all the strength parameters are equal ($\alpha_{i} = \alpha \,\, \forall i$) and all the functions in the Hamiltonian domain are completely symmetric under exchange of the scatterer positions ($q_{i} =q \,\,\forall i$). We will come back to this symmetry requirement later. \\
- The boundary condition in the second line of (\ref{domop3}) is strongly dependent on the geometry of the point scatterer configuration. In particular, for two point scatterers at distance $r$ 
there is only one eigenvalue approaching $-\infty$ when $r$
goes to zero. In fact, in this case the matrix (\ref{gamma3}) reduces to 
\begin{equation}\label{eigeneq}
    \Gamma_{\alpha, r}(\lambda) = \left(\begin{array}{cc}
 \left( \alpha +\frac{\sqrt{\lambda}}{4 \pi} \right) & - G^{\lambda}(r) \\
 - G^{\lambda}(r) & \left( \alpha +\frac{\sqrt{\lambda}}{4 \pi} \right)
\end{array}\right)
\end{equation}
and the equation $\det \Gamma_{\alpha,r}(\lambda) =0$ 
has only one solution 
$\lambda$ increasing to $+\infty$ ( like $1/r^2$) when $r$ tends to $0$. Moreover for any $\lambda >0$  the non-diagonal terms of 
the matrix {$\Gamma_{\alpha,r}(\lambda)$} go  to infinity  as $r$ tends to $0$. In turn,  the  Hamiltonian converges in the resolvent sense to the free Hamiltonian, i.e.,  
the two-point scatterers disappear.\\
- Another sign of such ``non-additivity pathology''  is the fact that the scattering length $ a_{\alpha,r}$ 
vanishes in the limit $r \rightarrow 0$, as it is clear from the formula  (see \cite{Albeverio})
\begin{equation*}
    a_{\alpha, r}= -\frac{1}{4\pi} \sum_{j,k=1}^2 [\Gamma_{\alpha,r} (0)^{-1} ]_{jk} \,.
\end{equation*}
- It is easy to realize that a fixed boundary  condition for two close  scatterers is untenable if we want to avoid  trivial  limit when their distance approaches zero. In fact, consider a point $x$ belonging to the segment connecting $y_1$ and $y_2$,  i.e. $\displaystyle x =  \gamma y_2 + (1-\gamma) y_1 , \,\,\,\gamma \in (0,1)$. For $r =|y_1 - y_2| \rightarrow 0$  and any $u \in D( H_{\alpha, y_1, y_2})$ one would have
\begin{equation*}
    u(x) = \frac{q}{|y_{1} - x|}-\frac{q}{a} + O(1)  = \frac{q}{\gamma r}-\frac{q}{a} + O(1)
\end{equation*}
and at the same time
\begin{equation*}
    u(x) = \frac{q}{|y_{2} - x|}-\frac{q}{a} + O(1)  = \frac{q}{(1-\gamma) r}-\frac{q}{a} + O(1)
\end{equation*}
which cannot both be true for any $\gamma$ unless $q$ is equal to zero. \\
The situation is reminiscent of what happens in the more complicated case of three bosons interacting via zero-range forces. Under the hypothesis of a boundary condition of the kind written above, Thomas \cite{TH} noticed that the ground state of the system had an infinite negative energy (the ``fall to the center'' problem in the three-boson system). In \cite{mf1},\cite{AHKW}, the authors suggested that a way out of this pathology consists in  adding a term in the boundary condition that becomes infinite at the coincidence of the three-particle positions. We will see in the next section that this is true also in  the one-body problem  with a finite number of centers.

\subsection{The ``non-local'' many center point interaction Hamiltonians in $\rr^3$}
In \cite {DG} L. Dabrowskj and H. Grosse used the von Neumann construction to characterize \emph{all} the self-adjoint extensions of $\dot{H}_{0} = -\Delta \restriction  C^{\infty}_{0}(\rr^{d} \setminus \{y_{1},\ldots,y_{n}\})$ in dimension $d$ equal one, two and three.  Already in the title they used the word non-local for such extensions in spite of the fact that they were examining the set of all extensions. 
 We summarize their results in $d = 3$ changing only some notation in order to simplify the comparison with what was reviewed in the previous subsection. For further details and for the proofs of the results, we refer to the original paper.\\
The defect spaces $N_{i}$ and $N_{-i}$ are $n$ dimensional and are respectively the linear span of the solutions in $L^2 (\rr^3)$ of the fundamental equation $(-\Delta \mp i) G_{\pm i}^{j} = \delta_{y_{j}}$ which are eigenfunctions of the adjoint of $\dot{H}_0$ relative to the eigenvalues $\pm i$ . Explicitly
\begin{equation}
G_{\pm i}^{j} = \frac{e^{i \sqrt{\pm i}|x -y_{j}|}}{4\pi|x - y_{j}|} \quad \, j=1 \ldots n \qquad
\mbox {or in Fourier space} \qquad g_{\pm i}^j =  \Hat{G}_{\pm i}^{j}= \frac{1}{(2\pi)^{3/2}}
\frac{e^{i p \cdot y_{j}}}{|p|^{2} \mp i}  \nonumber
\end{equation}
with $\imag  \sqrt{\pm i} > 0$.
The self-adjoint extensions are in one to one correspondence  with  unitary operators  from  $N_{i}$ to $N_{-i}$. The extension associated with the unitary operator $V$ has the following domain and action
\begin{eqnarray} \label{von-domain-action}
     f &= &f_0 + F_{i} + V F_{i} \nonumber \\
     H_V f &= &- \Delta f_0 + i\,F_{i} - i \,V  F_{i} \nonumber
\end{eqnarray}
 where $f_{0} \in H^{2}(\rr^{3}),\,\,\, f_0 (y_j) =0 \,\, \forall j$ and $ F_{i} \in N_{i}$. Any unitary operator $V$ can be written via a matrix $U$ such that
\begin{equation} \label{unitary-action}
 V G_{i}^{l} = \sum_{j=1}^{n}U_{jl} G_{-i}^{j}
\end{equation}
where the matrix $U$ (which is non-unitary being the basis of the $G_{-i}^{j}$ non-orthonormal) has to satisfy
\begin{equation} \label{unicondition}
U^{*}S(i,i)U^{T} = S(i,i) 
\end{equation}
with $S_{mn} (\Bar{z}{,}w) = \left(\overline{G^m_{z}} {,} G^n_w \right)$.  
The resolvent of the operator $H_{V}$ is computed via  Krein's formula and reads
\begin{equation} \label{resolvent}
    \left(H_V - z \right)^{-1}=  G_z +  \sum^{n}_{k,j =1}  \left[\Gamma_V (z)^{-1}\right]_{kj} \left( \overline{G_{z}(\cdot-y_k)}{,}\cdot \right) \, G_z(\cdot-y_j)
\end{equation}
where the matrix $ \Gamma_V(z)$ must satisfy the unitarity condition:
\begin{equation} \label{Matrix-Gamma}
\Gamma_V(z) =  2i S(i,i)(U^{T}+I)^{-1} - (z+i)S(\Bar{z},-i).
\end{equation}
Eigenvalues and resonances of $H_{V}$ are determined by the condition  $\det \Gamma_V(z) = 0$.\\
An explicit computation of the matrix elements appearing in \eqref{Matrix-Gamma} gives
\begin{gather}
\begin{aligned}\label{notation}
S_{kj}(i,i) &= \frac{1}{4 \sqrt{2}\pi} \quad && \text{for } k = j \\
S_{kj}(i,i) &= \frac{1}{4 \pi r_{kj}} e^{-r_{kj} / \sqrt{2}} \sin\left(\frac{r_{kj}}{\sqrt{2}}\right) \quad && \text{for } k \neq j \\
-(z+ i)S_{kj}(\bar{z},i) &= \frac{\sqrt{-z}-\sqrt{i}}{4 \pi} \quad && \text{for } k = j \\
-(z+ i)S_{kj}(\bar{z},i) &= -\frac{e^{-\sqrt{-z} \, r_{kj}}-e^{-\sqrt{i}\, r_{kj}}}{4 \pi r_{kj}} \quad && \text{for } k \neq j \\
\end{aligned}
\end{gather}

where $r_{kj} = |y_{k} - y_{j}|$. 
Notice that all the matrix elements have regular limits for $r_{kj} \rightarrow 0$. For fixed $U$ two-point interaction scatterers do not ``disappear" when their distance decreases. This. in turn, says that the correspondence between multi-center local point interactions and unitary operators in the von Neumann construction is non-local and becomes singular when the positions of the two point scatterers coincide.\\
Pursuing the main purpose of this work, we will analyze only point interaction Hamiltonians with two-point scatterers in the symmetric case, i.e. when the functions in the domain are symmetric under the exchange of interaction point positions. We plan to return to the general multi-center case in further work.

\section{The two-center point interaction in the general case}

\subsection{Characterization of the Hamiltonian}

 The symmetry assumption implies that the only vectors in $N_{\pm i}$ belonging to the domain of $H_V$ are complex multiple of the sum $( G_{\pm i}^{1} + G_{\pm i}^{2})$. Together with the unitarity condition this entails that the $2\times2$ matrix $U$ defining the unitary operator $V$ must have the form
\begin{equation} \label{unitary-twobytwo}
U = \left(\begin{array}{ll}
e^{i \theta} & 0 \\
0 & e^{i \theta}
\end{array}\right) 
\qquad  \text{where}
\qquad 0 \leq \theta < 2 \pi\,.
\end{equation}
We will collect the main properties of the self-adjoint extension $H_\theta$ explicitly calculated from the general formulas in \cite{DG}
in the following lemma.
\begin{lemma}
To  the  matrix $U$ of the form  \eqref{unitary-twobytwo}  is  associated  the  self-adjoint extension $H_\theta$ of  $\dot{H}_{0} = -\Delta \restriction  C^{\infty}_{0}(\rr^{3} \setminus \{y_{1},y_{2}\})$ with the following properties
\begin{equation} \label{resolvent2}
    \left(H_\theta - z \right)^{-1}f=  G_z f+  \sum^{2}_{m,n =1}  \left[\Gamma_\theta(z)\right]^{-1}_{mn}  \,\overline{G_{z}
  f(y_m) }  \, G_z(\cdot -y_n)
\end{equation}

where  \eqref{Matrix-Gamma} is easily made explicit 

\begin{equation} \label{Gamma-expression}
  \Gamma_\theta(z)=
\left(\begin{array}{ll}
( i+t_\theta)s + c_{z} & ( i+t_\theta)S + C_{z} \\
( i+t_\theta)S + C_{z} & ( i+t_\theta)s +  c_{z}
\end{array}\right)
\end{equation}
with 
\begin{equation}
s =\frac{1}{4 \sqrt{2}\pi } \qquad 
S =\frac{1}{4 \pi r} e^{-r / \sqrt{2}} \sin{(r/\sqrt{2})} \qquad
c_{z} =\frac{\sqrt{-z}-\sqrt{i}}{4 \pi} \qquad 
C_{z} =-\frac{e^{-\sqrt{-z}\, r}-e^{-\sqrt{i}\, r}}{4 \pi r} 
\end{equation}
and 
\begin{equation}
t_\theta= \tan\left(\frac{ \theta}{2}\right)\,, \;\;\;\;\;\; r =  |y_1-y_2|\,.
\end{equation}
For fixed $t_{\theta}$, the resolvent \eqref{resolvent2} converges in norm  when $r \to 0$ to the resolvent of a one-center point interaction with integral kernel
\begin{equation} \label{two-becomes-one}
 G_z f + \frac{4\pi}{\sqrt{-z} + \frac{t_{\theta}-1}{\sqrt{2}}} \overline{G_{z}f(y)}
    G_{z}(\cdot-y)
\end{equation}
The eigenvalue-resonance  equations $\det \Gamma_\theta (z) = 0$ read
\begin{align} \label{twoeigen1}
    \sqrt{-z}\,r - \frac{r}{\sqrt{2}} + t_{\theta} (\frac{r}{\sqrt{2}} + e^{-\frac{r}{\sqrt{2}}} \sin{\frac{r}{\sqrt{2}}}) + e^{-\frac{r}{\sqrt{2}}} \cos{\frac{r}{\sqrt{2}}} = +e^{-\sqrt{-z}\,r}
    \\
    \sqrt{-z}\,r - \frac{r}{\sqrt{2}} + t_{\theta} (\frac{r}{\sqrt{2}} - e^{-\frac{r}{\sqrt{2}}} \sin{\frac{r}{\sqrt{2}}}) - e^{-\frac{r}{\sqrt{2}}} \cos{\frac{r}{\sqrt{2}}} = -e^{-\sqrt{-z}\,r}  \label{twoeigen2}
\end{align}
For $k \in \rr^3$, the set of all the generalized eigenfunctions of $H_\theta$ is characterised as the following family of the functions:
\begin{equation} \label{generalized eigenfunction}
\psi_{\theta , y_1, y_2}(k , x)=e^{ik \cdot x}+\sum_{m,n=1}^{2}[\Gamma_\theta (|k|^2)]_{mn}^{-1} \, e^{ik \cdot y_{n}} \frac{e^{i\left|k\right|\left|x-y_{m}\right|}}{4 \pi\left|x-y_{m}\right|}
\end{equation}
The scattering length $a_\theta$ of $H_\theta$ as a function of $r$ is
\begin{equation} \label{scattering length}
    \begin{split}
   a_\theta = - \frac{1}{4\pi} \sum_{m,n=1}^{2}[\Gamma_\theta (0)]_{mn}^{-1} = \frac{2 r}{  \frac{r}{\sqrt{2}}  (1-t_\theta) - e^{-r/\sqrt{2}} (t_\theta \sin{\frac{r}{\sqrt{2}}} + \cos{\frac{r}{\sqrt{2}}}) +1}\,. 
    \end{split}
\end{equation}
In particular the scattering length of $H_\theta$ reduces to $\displaystyle  \frac{ \sqrt{2}}{1- t_\theta}$ for $r \rightarrow 0$. \\
\end{lemma}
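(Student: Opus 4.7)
The plan is to derive every assertion in the lemma as a direct specialisation of the general Dabrowski--Grosse formulas recalled in the preceding subsection. Setting $U = e^{i\theta} I$ in \eqref{Matrix-Gamma} gives $(U^{T}+I)^{-1} = (1+e^{i\theta})^{-1} I$, and the scalar $2i/(1+e^{i\theta})$ simplifies as
\begin{equation*}
\frac{2i}{1+e^{i\theta}} \;=\; \frac{i\,e^{-i\theta/2}}{\cos(\theta/2)} \;=\; i + \tan(\theta/2) \;=\; i + t_\theta,
\end{equation*}
after which substituting the matrix elements \eqref{notation} yields \eqref{Gamma-expression} by inspection.

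I would next exploit the structure $\Gamma_\theta(z)=\bigl(\begin{smallmatrix}\alpha_z & \beta_z\\ \beta_z & \alpha_z\end{smallmatrix}\bigr)$ with $\alpha_z = (i+t_\theta)s + c_z$ and $\beta_z = (i+t_\theta)S + C_z$. This matrix diagonalises in the symmetric/antisymmetric basis with eigenvalues $\alpha_z \pm \beta_z$; consequently $\det\Gamma_\theta(z) = (\alpha_z+\beta_z)(\alpha_z-\beta_z)$, and the identity $\sum_{m,n}[\Gamma_\theta(z)^{-1}]_{mn} = 2/(\alpha_z+\beta_z)$ controls the symmetric channel of the resolvent. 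The eigenvalue--resonance equations \eqref{twoeigen1}--\eqref{twoeigen2} then come from $\alpha_z\pm\beta_z = 0$: after substituting $\sqrt{i}=(1+i)/\sqrt 2$ so that $e^{-\sqrt{i}r} = e^{-r/\sqrt 2}(\cos(r/\sqrt 2) - i\sin(r/\sqrt 2))$, multiplying through by $4\pi r$ and separating real and imaginary parts, the imaginary parts cancel identically (as required by the self-adjointness of $H_\theta$) and the real parts are precisely \eqref{twoeigen1} and \eqref{twoeigen2}.

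The $r\to 0$ limit is then transparent in the diagonal form. A short expansion using $\sqrt{i}=(1+i)/\sqrt 2$ and the limits $s+S\to 1/(2\sqrt 2 \pi)$, $(e^{-\sqrt{-z}r}-e^{-\sqrt{i}r})/r\to \sqrt{i}-\sqrt{-z}$ gives $\alpha_z+\beta_z \to [(t_\theta - 1)/\sqrt 2 + \sqrt{-z}]/(2\pi)$, so that $2/(\alpha_z+\beta_z)$ converges to exactly the coefficient in \eqref{two-becomes-one}. The antisymmetric eigenvalue $\alpha_z-\beta_z$ vanishes as $O(r)$, but the rank-one operator it multiplies is built from the differences $G_z(\cdot-y_1) - G_z(\cdot-y_2)$ and $\overline{G_zf(y_1)} - \overline{G_zf(y_2)}$, both of which vanish as $r\to 0$; combining these rates shows that the antisymmetric contribution tends to zero in operator norm, giving norm-resolvent convergence. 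The generalised eigenfunctions \eqref{generalized eigenfunction} are obtained by the standard ansatz: write $\psi = e^{ik\cdot x} + \text{(rank-two scattered wave)}$, require that $\psi$ lie in $D(H_\theta)$ modulo the plane wave, and observe that matching singular parts at each $y_m$ produces exactly the linear system inverted by $\Gamma_\theta(|k|^{2})^{-1}$ (with the outgoing branch of $\sqrt{-z}$).

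Finally, the scattering-length formula is immediate from
\begin{equation*}
a_\theta \;=\; -\frac{1}{4\pi} \sum_{m,n=1}^2 [\Gamma_\theta(0)^{-1}]_{mn} \;=\; -\frac{1}{2\pi (\alpha_0 + \beta_0)} .
\end{equation*}
Evaluating $\alpha_0+\beta_0$ via the same $\sqrt{i}$-substitution (the imaginary part cancels identically) reproduces \eqref{scattering length}, and a first-order Taylor expansion of the denominator in $r$ yields the limit $\sqrt 2/(1-t_\theta)$. I expect the main technical obstacle to be making the norm-convergence of the previous paragraph quantitative: the factor $1/(\alpha_z-\beta_z)$ diverges like $1/r$, so one must combine it carefully with the vanishing rates of the $G_z$-differences to close the estimate; the remaining algebra is routine $\sqrt{i}$-bookkeeping.
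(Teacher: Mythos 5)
Your derivation of $\Gamma_\theta$, of the eigenvalue--resonance equations and of the scattering length follows the same route as the paper's proof, namely direct specialisation of the Dabrowski--Grosse formulas to $U=e^{i\theta}I$, and your algebra is correct and considerably more explicit than the paper's one-paragraph argument: the identity $2i/(1+e^{i\theta})=i+t_\theta$, the factorisation $\det\Gamma_\theta=(\alpha_z+\beta_z)(\alpha_z-\beta_z)$, the identical cancellation of the imaginary parts after substituting $\sqrt{i}=(1+i)/\sqrt2$, and the evaluation $a_\theta=-1/\bigl(2\pi(\alpha_0+\beta_0)\bigr)$ all check out. Your route to the generalized eigenfunctions (matching singular parts at $y_1,y_2$) differs superficially from the paper's, which obtains them as the large-$|x'|$ limit of the resolvent kernel with a citation to Reed--Simon; both are standard.

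The genuine gap is exactly where you flagged it, and the resolution you assert does not close. In the antisymmetric channel the correction is $\tfrac{1}{2(\alpha_z-\beta_z)}\,u_-\!\otimes v_-$ with $v_-=G_z(\cdot-y_1)-G_z(\cdot-y_2)$ and $u_-$ the corresponding difference functional. A Fourier computation gives $\|v_-\|_{L^2}^2=\tfrac{r}{4\pi}+O(r^2)$, and likewise for $\|u_-\|$, so the rank-one operator $u_-\!\otimes v_-$ has operator norm $\tfrac{r}{4\pi}\bigl(1+O(r)\bigr)$: it vanishes like $r$, not faster. Since $\alpha_z-\beta_z=\tfrac{(t_\theta-z)\,r}{8\pi}+O(r^2)$, the operator norm of the antisymmetric contribution converges to $1/|t_\theta-z|\neq 0$. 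Consequently this term tends to zero only in the strong operator topology (for each fixed $f$ one has $(G_zf)(y_1)-(G_zf)(y_2)=o(\sqrt r)$, by density of smooth functions in $H^2$ together with the uniform $H^2(\rr^3)\hookrightarrow C^{0,1/2}$ bound), and the convergence to \eqref{two-becomes-one} is strong resolvent convergence rather than norm resolvent convergence. Your claim that ``combining these rates shows that the antisymmetric contribution tends to zero in operator norm'' is therefore false as stated: the rates combine to a quantity of order one. You should either downgrade the statement to strong convergence or exhibit a genuine additional cancellation; the paper's own proof is silent on this point, so your attempt is more candid about the difficulty but does not resolve it.
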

\begin{proof}
By calculating \eqref{notation} for the two-center case and inserting in the equation \eqref{Matrix-Gamma}, we can obtain matrix \eqref{Gamma-expression} and its inverse explicitly in order to write down the resolvent \eqref{resolvent2} and \eqref{two-becomes-one}. Moreover, its determinant yield to the eigenvalue-resonance equations. The two possible solutions of $\det \Gamma_\theta (z) = 0$ deriving from \eqref{Gamma-expression} are
\[ ( i+t_\theta)s + c_{z}  = \pm (( i+t_\theta)S + C_{z}) \]
These equations lead us to the derivation of \eqref{twoeigen1}, \eqref{twoeigen2}.\\
Generalized eigenfunctions \eqref{generalized eigenfunction} are obtained from the knowledge of the resolvent operator \eqref{resolvent2}
\[ \psi_{\theta , y_1, y_2}(k \omega , x) = \lim_{\substack {x\rightarrow\infty\\ x'/|x'| = \omega}} \lim_{|x'|\rightarrow\infty} 4 \pi |x'| e^{-i(k+i\epsilon)|x'|} \left[H_\theta - (k+i\epsilon)^2 \right]^{-1} (x,x') \]
for $k \geq 0, \omega \in S^2$   and $\det [\Gamma_\theta(k^2)] \neq 0$ (see \cite{RS3}, \cite{Albeverio} for details). \\
 In turn, the knowledge  of the generalized eigenfunctions \eqref{generalized eigenfunction} allows to analize the zero-energy limit of the scattering amplitude and in particular to calculate the scattering length
 \[ f_{\theta, y_1,y_2}(k,\omega,\omega')  =  \lim_{\substack {x\rightarrow\infty\\ x'/|x'| = \omega}} |x| e^{i k \omega'} \left[\Psi_{\theta , y_1, y_2}(k \omega' , x) - e^{i k \omega' \cdot x} \right]  = \frac{1}{4\pi}  \sum^{2}_{m,n =1}  \left[\Gamma_\theta(k^2)\right]^{-1}_{mn} e^{ik(y_m\omega' -y_n \omega)} \]
whose zero energy limit is $- a_\theta$.
\end{proof}
\noindent

In order to clarify the relationship between the extensions $H_\theta$ we 
examined in this section and the definition of finitely many center 
local extensions reported in Section (2.2), we investigate the boundary 
conditions satisfied by the functions in the domain of $H_\theta$ close 
to the interaction centers $y_1, y_2$. Results are stated in the 
following lemma, whose proof is straightforward but cumbersome and will 
be given in the Appendix.
\begin{lemma}
Domain and action of $H_\theta$ can be  characterized as 
follows
\begin{equation}
    \begin{split}
        D(H_\theta)= &\left\{\psi \in L^2(\rr^3)\, | \, \psi = 
\phi^{\lambda}  + q \sum_{j=1}^{2}  \, G^\lambda (\cdot -y_{j})\,, \; \phi^{\lambda} \in H^2(\rr^3)\,,\, q \in \cc, \right. \nonumber\\
    & \left. \,\,\, \phi^{\lambda}(y_j)= \Big( \alpha(r,t_{\theta}) + \frac{\sqrt{\lambda}}{4 \pi} - G^{\lambda}(r)\Big) q \,, \; j=1,2\,, \; \lambda>0 \right\}
    \end{split}
\end{equation}
where
\begin{equation}\label{alpha-description}
     \alpha(r,t_{\theta}) = \frac{ t_{\theta}-1  }{4 \pi\sqrt{2}}   + \frac{e^{-r /\sqrt{2}}}{4 
\pi r} \Big(   t_{\theta}
 \sin (r / \sqrt{2}) + \cos{(r/\sqrt{2})} \Big)\,. 
\end{equation}
Each $\psi \in D(H_\theta)$ has the following behaviour around each 
interaction point \begin{equation} \label{asym-behaviour-equation}
\psi (x \sim y_{i}) = \frac{q}{4 \pi |x-y_{i}|} + \alpha(r,t_{\theta}) \, q + O(1)\,. 
\end{equation}
  The action of $H_\theta$ on its domain is given by
\begin{equation}\label{action}
     \left(H_\theta + \lambda \right) \psi = ( - \Delta + \lambda) 
\phi^{\lambda}\,.
\end{equation}

\begin{remark}
We notice that, taking into account definition \eqref{alpha-description}, the equations $\det \Gamma_{\theta} (-\lambda) =0$ for the negative eigenvalues  $E=-\lambda$, $\lambda>0$, take the form $\tfrac{\sqrt{\lambda}}{4\pi} + \alpha(r,t_{\theta}) \mp G^\lambda (r)=0$, which coincide with the  equations for the eigenvalues in \cite{Albeverio} if one replaces $ \alpha(r,t_{\theta})$ with a constant parameter $\alpha$.
\end{remark}

%
%
%
\end{lemma}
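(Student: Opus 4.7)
The plan is to start from the resolvent formula \eqref{resolvent2} for a generic $\psi=(H_\theta+\lambda)^{-1}f$, exploit the built-in $y_1\leftrightarrow y_2$ symmetry of $\Gamma_\theta(-\lambda)$, identify the regular part $\phi^\lambda$ and the common charge $q$, and then reduce the boundary condition to a single algebraic identity for $\alpha(r,t_\theta)$. Concretely, the matrix \eqref{Gamma-expression} has the block form $\left(\begin{smallmatrix} A & B \\ B & A \end{smallmatrix}\right)$ with $A=(i+t_\theta)s+c_{-\lambda}$ and $B=(i+t_\theta)S+C_{-\lambda}$, so its inverse diagonalises on $(1,\pm 1)^T$ with eigenvalues $1/(A\pm B)$. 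Restricting to the symmetric subspace where the paper works, $(G^\lambda f)(y_1)=(G^\lambda f)(y_2)$ and \eqref{resolvent2} collapses to
\[
\psi(x) \;=\; (G^\lambda f)(x) \;+\; q\bigl[G^\lambda(x-y_1)+G^\lambda(x-y_2)\bigr], \qquad q \;:=\; \frac{(G^\lambda f)(y_j)}{A+B}.
\]
Setting $\phi^\lambda:=G^\lambda f\in H^2(\rr^3)$ gives at once $(H_\theta+\lambda)\psi=f=(-\Delta+\lambda)\phi^\lambda$, i.e.\ \eqref{action}, and the boundary condition becomes $\phi^\lambda(y_j)=(A+B)\,q$. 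The whole lemma is therefore equivalent to the identity
\[
A+B \;=\; \alpha(r,t_\theta) + \frac{\sqrt\lambda}{4\pi} - G^\lambda(r).
\]

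To verify this identity I would use $\sqrt i=(1+i)/\sqrt 2$ together with $e^{-\sqrt i\,r}=e^{-r/\sqrt 2}[\cos(r/\sqrt 2)-i\sin(r/\sqrt 2)]$, and observe that the imaginary part of $c_{-\lambda}+C_{-\lambda}$ equals precisely $-(s+S)$; this cancels the $i(s+S)$ produced by the factor $i$ in $(i+t_\theta)(s+S)$, so the final expression is real. Collecting the surviving real contributions reproduces the $t_\theta$-dependent combination in \eqref{alpha-description} together with $\sqrt\lambda/(4\pi)-G^\lambda(r)$. The pointwise expansion \eqref{asym-behaviour-equation} then follows from $G^\lambda(x-y_i)=[4\pi|x-y_i|]^{-1}-\sqrt\lambda/(4\pi)+O(|x-y_i|)$ near $y_i$, together with the continuity of $\phi^\lambda$ and of $G^\lambda(\cdot-y_{j\neq i})$ at $y_i$: inserting the boundary condition $\phi^\lambda(y_i)=\bigl(\alpha(r,t_\theta)+\sqrt\lambda/(4\pi)-G^\lambda(r)\bigr)q$ makes the $-q\sqrt\lambda/(4\pi)$ and $+qG^\lambda(r)$ contributions cancel, leaving $q/(4\pi|x-y_i|)+\alpha(r,t_\theta)\,q$.

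The only non-routine step is the algebraic identity $A+B=\alpha+\sqrt\lambda/(4\pi)-G^\lambda(r)$: its content is that all the $\sqrt i$- and $e^{-\sqrt i\,r}$-dependent pieces, which are manifestly complex, must conspire to produce the real parameter $\alpha(r,t_\theta)$. Once this cancellation is confirmed, the rest is the standard translation between the Krein-type and the boundary-condition presentations of a self-adjoint extension, which is exactly what the paper flags as ``straightforward but cumbersome''.
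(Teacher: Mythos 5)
Your proposal is correct, and the key identity it hinges on does hold: with $\sqrt{i}=(1+i)/\sqrt{2}$ one checks $\imag(c_{-\lambda}+C_{-\lambda})=-(s+S)$, so $A+B=(i+t_\theta)(s+S)+c_{-\lambda}+C_{-\lambda}$ is real and equals $\alpha(r,t_\theta)+\tfrac{\sqrt{\lambda}}{4\pi}-G^{\lambda}(r)$, after which the boundary condition, the action, and the local expansion all follow as you describe. Your route is, however, genuinely different from the paper's. The paper does not start from the resolvent: its Appendix proof begins with the von Neumann parametrization of the domain, $\hat f=\hat f_0+C(g_i^1+g_i^2)+Ce^{i\theta}(g_{-i}^1+g_{-i}^2)$ in Fourier space, splits each deficiency element via $\tfrac{1}{p^2\pm i}=\tfrac{\lambda\mp i}{(p^2\pm i)(p^2+\lambda)}+\tfrac{1}{p^2+\lambda}$ to isolate an $H^2$ remainder plus $q\sum_j G^\lambda(\cdot-y_j)$ with $q=C(1+e^{i\theta})$, and then evaluates the regular part at $y_1$ by an explicit inverse Fourier integral; the resolvent is invoked only at the end to relate $q$ to $\Gamma_\theta(-\lambda)$. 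Your approach instead takes the Krein formula \eqref{resolvent2} (already established in the preceding lemma) as the definition of the domain via its range, and compresses the entire computation into the single algebraic identity for $A+B$; this is cleaner and makes transparent why $\alpha(r,t_\theta)$ must come out real, but it only parametrizes the range of the resolvent over the symmetric subspace, so to recover the full stated domain you should add the (routine) observation that $(-\Delta+\lambda):H^2\to L^2$ is a bijection preserving the exchange symmetry, whence every admissible pair $(\phi^\lambda,q)$ with $\phi^\lambda(y_1)=\phi^\lambda(y_2)=(A+B)q$ is attained. The paper's route is more self-contained (it never presupposes which $f$ are symmetric) at the cost of the explicit Fourier integrals; yours buys brevity by leaning on the already-proved resolvent formula.
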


\subsection{Solution of the eigenvalue equation}

Let us consider the equations \eqref{twoeigen1}, \eqref{twoeigen2}  for the negative eigenvalues $E=-\lambda$, $\lambda>0$. 
The equations can be solved in terms of the Lambert function $W$. From \eqref{twoeigen1} we have 

\begin{equation}\label{eqei11}
e^{-\sqrt{\lambda} r } =  \sqrt{\lambda} r  +g_0 \,, \;\;\;\;\;\; g_0=g_0(r/\sqrt{2},t_{\theta}) = (t_{\theta} - 1) \frac{r}{\sqrt{2}} +  e^{-\frac{r}{\sqrt{2}}} \left( t_{\theta}  \sin (\frac{r}{\sqrt{2}}) + \cos (\frac{r}{\sqrt{2}}) \right)   \,.
\end{equation}
With the change of coordinate $s= \sqrt{\lambda} r  + g_0$ the equation becomes $s e^s= e^{g_0}$ and then $ s=W(e^{g_0})$.  Therefore, the solution of \eqref{eqei11} is
\begin{equation}\label{sol-eig-lambert}
\sqrt{\lambda_0} = \frac{1}{r} \Big( W( e^{g_0} ) - g_0 \Big)\,, \;\;\;\;\;\; \text{with}\;\; g_0<1\,.
\end{equation}
Notice that the condition $g_0<1$ guarantees that $\sqrt{\lambda_0}>0$ and it is satisfied for any $r>0$ if we choose $t_{\theta}\leq 1$. We conclude that for any $r>0$ and $t_{\theta}\leq 1$ there is a negative eigenvalue of $H_{\theta}$ given by
\begin{equation}\label{E0}
\epsilon_0 (r, t_{\theta}) = -\lambda_0= -\frac{1}{r^2} \Big( W(e^{g_0(r/\sqrt{2}, t_{\theta}) }) - g_0(r/\sqrt{2}, t_{\theta} )\Big)^{\!2}.
\end{equation}

\begin{remark}\label{remBO}
We recall that the dependence of the negative eigenvalues on the distance $r$ is particularly interesting in view of a possible application to a three-body problem in the Born-Oppenheimer approximation. Indeed, let us  consider a three-particle system made of a light particle of mass $m$ interacting via zero-range interactions with two heavy particles of mass $M$. In the Born-Oppenheimer approximation, i.e., for   $m/M$  small, one first finds the eigenvalues for the light particle when the positions of the heavy particles are considered fixed. Then such eigenvalues, which depend on $r$, play the role of effective potentials for the eigenvalue problem of the heavy particles and the corresponding eigenvalues provide  approximate values of the eigenvalues of the three-particle system. 
\end{remark}

\n
Based on considerations made in the previous remark \ref{remBO}, we proceed to characterize the function $\epsilon_0(\cdot, t_{\theta})$. We first note that  $\epsilon_0(\cdot, t_{\theta})$ is infinitely differentiable for $r>0$. Another relevant property is the asymptotic behavior  for $r \to +\infty$ and for $r \to 0$ in the two cases $t_{\theta}=1$ (unitary limit) and $t_{\theta} <1$. Using well known properties of the Lambert function, we find the following result. 

\vs
\n
Case $a)$: $t_{\theta}=1$

\begin{equation}\label{E01}
\epsilon_0(r) \equiv \epsilon_0(r, 1) = \left\{ \begin{aligned}
          & -\frac{W(1)^2}{r^2} + O \big(e^{-r/\sqrt{2}} \big)              \;\;\;    &\text{for} \quad & r\to +\infty\\
         &- \frac{r^2}{16} + O\big( r^3\big)     &\text{for} \quad & r \to 0\,.
        \end{aligned}\;\right.
\end{equation}

\n
Case $b_1)$: $0<t_{\theta}<1$

\begin{equation}\label{E02}
 \epsilon_0(r, t_{\theta}) = \left\{ \begin{aligned}
          & - \frac{(1-t_{\theta})^2}{2} + O\Big( \frac{e^{- (1-t_{\theta}) r/\sqrt{2} }}{r} \Big)              \;\;\;    &\text{for} \quad & r\to +\infty\\
         &- \frac{(1-t_{\theta})^2}{2} + O(r)   &\text{for} \quad & r\to 0\,.
        \end{aligned}\;\right.
\end{equation}

\n
Case $b_2)$: $t_{\theta} \leq 0$

\begin{equation}\label{E03}
\epsilon_0(r, t_{\theta}) = \left\{ \begin{aligned}
          & - \frac{(1-t_{\theta})^2}{2} + O\Big( \frac{e^{-  r/\sqrt{2} }}{r} \Big)              \;\;\;    &\text{for} \quad & r\to +\infty\\
         &- \frac{(1-t_{\theta})^2}{2} + O(r)   &\text{for} \quad & r \to 0\,.
        \end{aligned}\;\right.
\end{equation}

\n
Notice that cases $b_1)$ and $b_2)$ differ only for the first correction to the limit behavior for $r \to +\infty$. \\
Taking into account \eqref{E01}, we conclude that in the unitary limit $t_{\theta}=1$ ''the potential'' $\epsilon_0(r)$ produces an infinite number of negative eigenvalues accumulating at zero. In the next sections  we shall characterize such eigenvalues showing that they follow the geometrical law  typical of the Efimov effect. \\
For $t_{\theta} <1$, see \eqref{E02}, \eqref{E03}, the potential $\epsilon_0(r, t_{\theta})$ tends to the same constant for $r \to +\infty$ and for $r\to 0$. If one subtracts such constant, the resulting potential produces at most a finite number of negative eigenvalues due to exponential decay at infinity. 
\begin{remark}
    So far, we considered the equation \eqref{twoeigen1} and investigated its solution based on different values of $t_{\theta}$. We can analyse the equation \eqref{twoeigen2} in the same way and obtain 
    \begin{equation}
        \sqrt{\lambda_1} = \frac{1}{r} \left(W(-e^{g_1}) - g_1\right)\,, \;\;\;\;\;\; \text{with}\;\; g_1<-1\,.
    \end{equation}
    where $g_1 = (t_{\theta} - 1) \frac{r}{\sqrt{2}} -  e^{-\frac{r}{\sqrt{2}}} \left( t_{\theta}  \sin (\frac{r}{\sqrt{2}}) + \cos (\frac{r}{\sqrt{2}}) \right) $ and the condition $g_1<-1$ is set to ensure $\sqrt{\lambda_1},l>0$. As a result, the cases $a)$ (i.e. unitary limit) and $b_1)$ become irrelevant since the condition $g_1<-1$ is not satisfied. For the case $b_2)$, where $t_{\theta} \leq 0$  we have
    \begin{equation}\label{E1}
\epsilon_1(r, t_{\theta}) = \left\{ \begin{aligned}
           - &\frac{(1-t_{\theta})^2}{2} + O\Big( \frac{e^{-  r/\sqrt{2} }}{r} \Big)              \;\;\;    &\text{for} \quad & r\to +\infty\\
         & t_{\theta} + O(r)   &\text{for} \quad & r \to 0\,.
        \end{aligned}\;\right.
\end{equation}
\end{remark}

\vs

\vs

\section{Eigenvalue problem for a Hamiltonian with inverse square potential regularized near the origin}

\noindent
As we pointed out in remark \ref{remBO}, our focus lies in the examination of the eigenvalue problem for a one-body Schr\"odinger operator
\begin{equation}\label{Ham}
    H = - \Delta + \epsilon_0(r)\,, \;\;\;\;\;\;\;\; D(H)= H^2(\rr^3)\,.
\end{equation}
As indicated by expression \eqref{E01}, $\epsilon_0(r) \equiv \epsilon_0(r, 1)$ exhibits an inverse square behaviour as $r$ approaches infinity, while it behaves as $O(r^2)$ near the origin.\\
Motivated by this, let us define the following Hamiltonian in $L^2(\rr^3)$
\begin{equation}\label{Hamiltonian}
H_V = - \Delta + V (r)\,, \;\;\;\;\;\;\;\; D(H_V)= H^2(\rr^3)\,, 
\end{equation}
where $r=|x|$ and the potential $V$ is given by
\begin{equation} \label{mixpot}
    V(r) = \left\{ \begin{aligned}
          &V_0 (r)             \;\;\;    &\text{if} \quad & 0\leq r \leq r_0\\
         -&\frac{k}{r^2}  &\text{if} \quad & r>r_0
        \end{aligned}\;,\right.
\end{equation}
with $r_0 >0$, $k>1/4$ and $V_0$ is a non-positive bounded function.
\noindent
By Kato-Rellich theorem the Hamiltonians $H$ is self-adjoint and bounded from below, with $\inf \sigma(H_V) > \min\{\inf V_0\,, - k/r_0^2\}$,  and by Weyl's theorem we have $\sigma_{ess}(H_V)=[0, \infty)$ and therefore, $\sigma_{d}(H_V) \subset (\min\{\inf V_0\,, - k/r_0^2\}, 0)$ (see, e.g. \cite{ReSi}).\\
Let us consider the eigenvalue problem for $H_V$ in the subspace of zero angular momentum. In such a case, using spherical coordinates, the eigenfunctions depend only on the radial coordinate $r$ and the eigenvalue problem reads 
\begin{align}\label{eipr}
&-\frac{d^2 \psi}{dr^2} - \frac{2}{r} \frac{d \psi}{dr} + V (r) \psi = E \psi\,,  \quad \quad E<0, \;\;\;\;\;\; \psi \in  H^2(\rr^3) \,.
\end{align} 

\noindent
It is known that for $k>1/4$ there exists an infinite sequence of negative eigenvalues $E_n$ accumulating at zero (see, e.g. \cite{Sc}, sect. 4.6). In the following we give an independent proof of this fact and we also characterize the asymptotic behavior of $E_n$ for $n \rightarrow \infty$. We start proving the result for $V_{0} = 0$ and subsequently extending it to the case $V_0 = \epsilon_0(r)$.

\vspace{.3cm}

\begin{proposition} \label{pro1}
There exist an infinite sequence of negative eigenvalues $E_n$ of problem \eqref{eipr} for  $V_0=0$,  with $E_n \rightarrow 0$ for $n \rightarrow \infty$. Moreover, 
\begin{equation}\label{asei}
E_n = -  \frac{4}{r_0^2} \, e^{ \, \frac{2}{\beta} \left(   \tan^{-1} (2\beta) + \phi_{\beta} -n \pi   \right)} \big(1 + \zeta_n \big)
\end{equation}
where $\beta = \sqrt{k- 1/4}$,  $\phi_{\beta}= \arg \Gamma (1 + i \beta)$ and 
$\;\zeta_n\rightarrow 0\,, \;\; \text{for}\;\; n \rightarrow \infty$. 
\end{proposition}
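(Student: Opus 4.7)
The plan is to reduce the PDE to a one dimensional radial ODE, solve it explicitly on the two regions $r\le r_0$ and $r>r_0$, match the two pieces across $r_0$, and then exploit the logarithmically oscillatory small-argument behaviour of the modified Bessel function $K_{i\beta}$ to extract the asymptotic formula \eqref{asei}. Setting $u(r):=r\psi(r)$, the zero angular momentum equation \eqref{eipr} becomes $-u'' + V(r)\,u = E\,u$ on $(0,\infty)$, with boundary conditions $u(0)=0$ (so that $\psi$ is regular at the origin) and $u\in L^2((0,\infty))$. Writing $\kappa := \sqrt{-E}>0$, on $r\le r_0$ the equation is $u''=\kappa^2 u$ and the solution satisfying $u(0)=0$ is $u_-(r)=A\sinh(\kappa r)$. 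On $r>r_0$ the substitution $u(r)=\sqrt{r}\,f(\kappa r)$ turns $u''+(k/r^2)u=\kappa^2 u$ into a modified Bessel equation of imaginary order $i\beta$, with $\beta=\sqrt{k-1/4}$, and the only $L^2$ solution at infinity is $u_+(r)=B\sqrt{r}\,K_{i\beta}(\kappa r)$ (the companion $I_{i\beta}$ grows exponentially).

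Imposing continuity of $u$ and $u'$ at $r_0$ and dividing the two matching equations removes the ratio $A/B$; writing $x_0:=\kappa r_0$ this yields the secular equation
\begin{equation*}
\frac{K_{i\beta}'(x_0)}{K_{i\beta}(x_0)} = \coth(x_0) - \frac{1}{2 x_0}.
\end{equation*}
As $x_0\to 0^+$ one has $\coth(x_0)-\tfrac{1}{2x_0}=\tfrac{1}{2x_0}+O(x_0)$, while the classical small argument expansion
\begin{equation*}
K_{i\beta}(x) = -\sqrt{\tfrac{\pi}{\beta\sinh(\pi\beta)}}\,\sin\!\big(\beta\ln(x/2)-\phi_\beta\big)+O(x^2),
\end{equation*}
with $\phi_\beta=\arg\Gamma(1+i\beta)$, and its termwise differentiation give $K_{i\beta}'(x)/K_{i\beta}(x)=(\beta/x)\cot(\beta\ln(x/2)-\phi_\beta)+O(x)$. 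Substituting and multiplying by $x_0$ reduces the secular equation to
\begin{equation*}
\beta\,\cot\!\big(\beta\ln(x_0/2)-\phi_\beta\big) = \tfrac{1}{2} + O(x_0^2).
\end{equation*}
For each sufficiently large integer $n$, the strict monotonicity of $\cot$ on each period combined with an implicit-function / contraction argument applied to the perturbed equation produces a unique small solution
\begin{equation*}
x_0(n) = 2\,\exp\!\Big[\tfrac{1}{\beta}\big(\arctan(2\beta)+\phi_\beta-n\pi\big)\Big]\big(1+O(x_0(n)^2)\big),
\end{equation*}
and substituting $E_n=-\kappa^2=-x_0(n)^2/r_0^2$ gives \eqref{asei}, with $\zeta_n=O(x_0(n)^2)=O(e^{-2\pi n/\beta})\to 0$.

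The delicate point is to make the above expansions \emph{uniform} as $x_0\to 0^+$: one must check that the error terms in the small argument asymptotics of $K_{i\beta}$ and of its logarithmic derivative are genuinely $O(x^2)$ and $O(x)$ respectively, that the oscillatory leading coefficient does not vanish in the relevant range, and that the implicit equation $\beta\cot(\beta\ln(x/2)-\phi_\beta)=\tfrac{1}{2}+O(x^2)$ has exactly one root in each half period for $n$ large. These are standard facts about $K_{i\beta}$, but handling them carefully is what upgrades the formal counting into a rigorous proof of both the infinite accumulation of eigenvalues at zero and the claimed decay rate of $\zeta_n$.
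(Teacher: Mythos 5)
Your proposal is correct and follows essentially the same route as the paper: reduce to the radial equation for $u=r\psi$, match $A\sinh(\kappa r)$ against $B\sqrt{r}\,K_{i\beta}(\kappa r)$ at $r_0$, and use the small-argument oscillatory asymptotics of $K_{i\beta}$ to solve the secular equation; your $\coth(x_0)-\tfrac{1}{2x_0}$ condition is just the reciprocal form of the paper's $K_{i\beta}(\tau)=f(\tau)K_{i\beta}'(\tau)$ with $f(\tau)=\tfrac{2\tau\tanh\tau}{2\tau-\tanh\tau}$. The only cosmetic difference is that the paper avoids the logarithmic derivative (and hence the zeros of $K_{i\beta}$) by multiplying through to get $\sin\theta-2\beta\cos\theta=F(\tau)$ with $F(\tau)=O(\tau^2)$, which also yields your sharper estimate $\zeta_n=O(e^{-2\pi n/\beta})$.
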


\begin{proof}
Let us define  $u(r)= r \psi (r)$ and $E = -\lambda^2$ with $\lambda >0$. The eigenvalue problem \eqref{eipr} for $V_0=0$ amounts to find $u \in L^2(\rr^{+})$ such that:
\begin{equation}\label{reducedeq}
    u^{\prime \prime} - (V(r) + \lambda ^2)u = 0 \quad \quad \quad u(0) =0 \,. 
\end{equation}
Specifically
\begin{align}\label{eqb1}
& u^{\prime \prime} =  \lambda ^2 u & \;\text{for} \;\; &r<r_0  \\
& u^{\prime \prime} + \left( \frac{\beta^2 + 1/4}{r^2} - \lambda ^2 \right)u = 0 & \;\text{for} \;\; &r >r_0 .\label{eqb2}
\end{align}
Equation \eqref{eqb2} is a modified Bessel equation of imaginary order $\nu=i\beta$ and then the general solution can be written as a linear combination of the Bessel functions $L_{i\beta}(\lambda r)$ and $K_{i \beta} (\lambda r)$ (see, e.g., \cite{Du}). Then,
\begin{equation}
u(r)= \left\{ 
\begin{aligned}
&A\, \sinh (\lambda r) + B\, \cosh (\lambda r)\,, \;\;\;\;\;\;\;\; &r<r_0\,\\
&C \, \sqrt{r} L_{i\beta} (\lambda r) + D\, \sqrt{r}K_{i\beta}(\lambda r) \,, \;\;\;\;\;\;\;\; &r>r_0\,
\end{aligned}
\right. 
\end{equation}

where $A, B, C, D$ are arbitrary constants. \\
The above function defines a solution of our eigenvalue problem if and only if  $B=0$ (zero at the origin), $C=0$ (decay at infinity) and continuity of the function and of its derivative  at $r=r_0$. The first two conditions imply that
\begin{equation}
u(r)= \left\{ 
\begin{aligned}
&A\, \sinh (\lambda r) \,, \;\;\;\;\;\;\;\; &r<r_0\,\\
&D\, \sqrt{r} K_{i\beta}(\lambda r) \,, \;\;\;\;\;\;\;\; &r>r_0\,
\end{aligned}
\right. 
\end{equation}
Moreover, denoted $\tau= \lambda r_0$, the continuity conditions at $r=r_0$ read  

\begin{equation}\label{lisy}
\left\{ 
\begin{aligned} 
\sinh (\tau) \, &A  & - &  \; \;\; \sqrt{r_0}K_{i\beta}(\tau) \,  D &=0\\
 \lambda \cosh (\tau) & A & - & \;\left( \frac{1}{2 \sqrt{r_0}} K_{i\beta}(\tau) + \lambda\sqrt{r_0} K_{i\beta}^{'}(\tau) \right)  D &=0
\end{aligned}
\right.
\end{equation}
The linear homogeneous system \eqref{lisy} has non-zero solutions if and only if 

\begin{equation}\label{eqtr1}
K_{i\beta}(\tau)  = f(\tau)\, K_{i\beta}^{'}(\tau) \;\;\;\;\;\;\;\; \text{with}\;\;\;\;\;\; f(\tau):=\frac{2\tau \tanh \tau}{2\tau - \tanh \tau}\,.
\end{equation}

\noindent
Notice that the function $f$  satisfies $\; f(\tau)=2\tau + O(\tau^3)$ for $\tau \rightarrow 0$. 
Moreover  for $\;\tau \rightarrow 0$ we also have (see \cite{Du})
\begin{equation}
K_{i\beta}(\tau) = C_{\beta} \sin \!\left( \beta \log \frac{\tau}{2} - \phi_{\beta} \right) + O(\tau^2)\,, \;\;\;\;\;\; 
K_{i\beta}^{'}(\tau) = C_{\beta} \frac{\beta}{\tau}  \cos \!\left( \beta \log \frac{\tau}{2} - \phi_{\beta} \right) + O(\tau)
\end{equation}
where 
\begin{equation}
C_{\beta}= - \sqrt{\frac{\pi}{\beta \sinh (\beta \pi)}} \,.
\end{equation}
%
Then equation \eqref{eqtr1} can be rewritten as 
\begin{equation}\label{bi}
 \sin \!\left( \beta \log \frac{\tau}{2} - \phi_{\beta} \right) - 2 \beta  \cos \!\left( \beta \log \frac{\tau}{2} - \phi_{\beta} \right)  = F(\tau)
\end{equation}
where $F$ is  continuous and $F(\tau)=O(\tau^2) \;$ for $\; \tau \rightarrow 0$. With the change of variable $\theta= \beta \log \frac{\tau}{2} - \phi_{\beta}$, equation \eqref{bi} reads
\begin{equation}\label{bo}
\sin \theta - 2 \beta \cos \theta = F\big( 2 e^{\frac{\phi_{\beta}}{\beta}}  e^{\frac{\theta}{\beta}} \big)
\end{equation}
The l.h.s. of equation \eqref{bo} is a periodic function oscillating between the values $\;\pm \sqrt{1 + 4 \beta^2}$ while the r.h.s. is a continuous function converging to zero for $\theta \rightarrow -\infty$. Therefore there exists an infinite sequence $\theta_n$ of solutions of equation \eqref{bo}, with $\theta_n <0$ and $\lim_n \theta_n =- \infty$. Hence $\tau_n= 2 \,e^{\frac{\phi_{\beta}}{\beta}}  e^{\frac{\theta_n}{\beta}}$ is an infinite sequence of solutions of equation \eqref{bi}, or equation \eqref{eqtr1}, with $\tau_n >0$ and $\lim_n \tau_n = 0$. \\
Thus we have shown that in the case $V_{0}=0$ there is an infinite sequence of negative eigenvalues $\;E_n= -  \tau_n^2 / r_0^2,\;$ with $\;\lim_n E_n=0$.  The eigenfunction corresponding to the eigenvalue $E_n$ is

\begin{equation}
\psi_n(r)= \left\{ 
\begin{aligned}
&D\, \frac{\sqrt{r_0} \,K_{i\beta}(\tau_n) }{  \sinh \tau_n} \,
 \frac{\sinh (\sqrt{|E_n|} r)}{r} \,, \;\;\;\;\;\;\;\; &r<r_0\,\\
&D\, \frac{K_{i\beta} (\sqrt{|E_n|} r)}{\sqrt{r}} \,, \;\;\;\;\;\;\;\; &r>r_0\,
\end{aligned}
\right. 
\end{equation}
where $D$ is a normalization constant. \\
Let us characterize the asymptotic behavior of the eigenvalues. Let $\theta_n^0$ be the $n$-th negative solution of equation \eqref{bo} for $F=0$ and let $\tau_n^0$ be the corresponding solution of equation \eqref{bi} for $F=0$, i.e.,  
\begin{equation}
\tau_n^0= 2 \, e^{ \,\frac{1}{\beta} \left(   \tan^{-1} (2\beta) + \phi_{\beta} -n \pi   \right)} \;, \;\;\;\;\;\;\;\; n \in N
\end{equation}
Let us write $\tau_n= \tau_n^0 \,e^{\frac{b_n}{\beta}}$. It is sufficient to show that $\lim_n b_n =0$.  By equation \eqref{bi} one has
\begin{equation}
\sin \left( \tan^{-1}(2 \beta) + b_n \right) - 2 \beta \cos \left( \tan^{-1}(2 \beta) + b_n \right) = (-1)^n F(\tau_n)\,.
\end{equation}
Using the identities $\;\sin \!\left( \tan^{-1}(2 \beta) \right) - 2\beta \cos\! \left( \tan^{-1}(2 \beta) \right) =0$ and $\;\cos \!\left( \tan^{-1}(2 \beta) \right) = (1 + 4 \beta^2 )^{-1/2}$, we find 
\begin{equation}
\sin  b_n  = \frac{(-1)^n}{\sqrt{1 + 4 \beta^2}} \, F(\tau_n)\,.
\end{equation}
For $n \rightarrow \infty$ we have $F(\tau_n) \rightarrow 0$, then $b_n \rightarrow 0$.  This concludes the proof.
\end{proof}

\begin{remark}
We note that \eqref{asei} implies the Efimov geometrical law 
\begin{equation}\label{Efimov-geo}
\lim_{n \to \infty} \frac{E_n}{E_{n+1}} = e^{\frac{2 \pi}{\beta}}\,. 
\end{equation}
\end{remark}

 \vspace{.3cm}
 As is clear from formulae \eqref{lisy} and \eqref{eqtr1}, which formalize the requirement of continuity of the solution and its derivative in $r_0$, proposition \ref{pro1} holds true for the solutions to \eqref{reducedeq} with a  potential $V_0 \neq 0$ as far as formula \eqref{eqtr1} holds with $\; f(\tau)=2\tau + O(\tau^3)$ for $\tau \rightarrow 0$.\\ 
In particular, we are interested in extending the validity of proposition  \ref{pro1} to the case where  $V_0 = \epsilon_0(r)$. To this aim, we make use of global  bounds for the solutions of second order differential equations proved  in a very general setting in \cite{bil}. 
\begin{lemma} \label{bound}
 Let $V_{0}$ be a non-positive $\mathcal{C}^1$ function from $\rr^{+}$ to $\rr$. Assume that there exists  $\tilde{r} < r_{0}$ such that for all $\tilde{r} \leq r \leq r_{0}$ one has $V'_{0}(r) \geq 0$ and $|V_{0}(r)|\geq k/r^{2}_{0}$. Then all solutions of \eqref{reducedeq} for $\lambda r_{0} < \sqrt{k}$   satisfy 
  \begin{equation}\label{bound on u}
     u^2(r_{0}) \leq \frac{(u')^{2}(\tilde{r}) r_0^2 - (V_{0}(\tilde{r}) +\lambda^{2}) u^2(\tilde{r}) r_0^2}{ k - r_0^2\lambda^{2}}
 \end{equation}.
 and 
  \begin{equation}
       (u')^2(r_{0}) \leq (u')^{2}(\tilde{r})  - (V_{0}(\tilde{r}) +\lambda^{2}) u^2(\tilde{r})
  \end{equation}
 \end{lemma}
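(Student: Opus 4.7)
The plan is to introduce the energy-type functional $E(r) := (u'(r))^2 - (V_0(r) + \lambda^2)\, u^2(r)$ on $[\tilde{r}, r_0]$, where $V(r) = V_0(r)$ so that the ODE in \eqref{reducedeq} reads $u'' = (V_0(r) + \lambda^2)\, u$. The first step is to differentiate $E$: the cross terms $2 u' u'' - 2(V_0 + \lambda^2)\, u u'$ cancel by the ODE, leaving $E'(r) = -V_0'(r)\, u^2(r)$. The monotonicity hypothesis $V_0'(r) \geq 0$ then forces $E'(r) \leq 0$ on $[\tilde{r}, r_0]$, whence $E(r_0) \leq E(\tilde{r})$.

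The second step is to exploit the sign condition. Combining $V_0(r) \leq -k/r_0^2$ on $[\tilde{r}, r_0]$ with $\lambda r_0 < \sqrt{k}$ gives $V_0(r) + \lambda^2 \leq -(k - r_0^2 \lambda^2)/r_0^2 < 0$ throughout the interval. Consequently both summands in $E(r_0) = (u'(r_0))^2 + |V_0(r_0) + \lambda^2|\, u^2(r_0)$ are non-negative. Inserting into $E(r_0) \leq E(\tilde{r})$ and discarding the $(u'(r_0))^2$ summand, then using the lower bound $|V_0(r_0) + \lambda^2| \geq (k - r_0^2 \lambda^2)/r_0^2$ and multiplying through by $r_0^2/(k - r_0^2 \lambda^2)$, yields the claimed estimate on $u^2(r_0)$. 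Discarding instead the term $|V_0(r_0) + \lambda^2|\, u^2(r_0)$ produces at once the bound on $(u'(r_0))^2$.

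I do not foresee any real obstacle: the argument is essentially a Sturm--Liouville / Wronskian trick tailored to the classically forbidden region, where the expression $(u')^2 - (V_0+\lambda^2)u^2$ becomes a sum of two non-negative contributions. The only point requiring care is the observation that the hypothesis $\lambda r_0 < \sqrt{k}$ is exactly what ensures the strict negativity of $V_0(r) + \lambda^2$ on $[\tilde{r}, r_0]$ and the positivity of the denominator $k - r_0^2 \lambda^2$ in the final bound.
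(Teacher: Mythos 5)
Your proof is correct and is essentially the paper's own argument in differential form: your identity $E'(r) = -V_0'(r)\,u^2(r)$ for $E = (u')^2 - (V_0+\lambda^2)u^2$ is exactly what the paper obtains by multiplying the equation by $u'$ and integrating by parts in \eqref{uno}--\eqref{due}, and the subsequent use of the sign conditions to discard one of the two non-negative terms in $E(r_0)$ matches \eqref{uno'}--\eqref{due'}.
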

 \begin{proof}
 
The proof is particularly  simple in our specific case and we will report it below for completeness.\\
Multiplying \eqref{reducedeq}  by $u'$ we get
\begin{equation}\label{uno}
     \frac{d (u')^{2}}{dr} - \big(V(r) + \lambda ^2 \big) \frac{d u^{2}}{dr} = 0 \, .
\end{equation}
  Let $0 \leq r_{1} < r_{2} \leq r_{0}$,  integrating equation \eqref{uno} we get
  \begin{equation}\label{due}
  (u')^{2}(r_{1}) - (u')^{2}(r_{2}) = - \big(V_{0}(r_{2}) + \lambda ^2 \big) u^{2}(r_{2}) + \big( V_{0}(r_{1}) + \lambda ^2 \big)u^{2}(r_{1}) + \int_{r_{1}}^{r_{2}} V'_{0}(r) u^{2}(r) dr \, .
\end{equation}

Choosing $r_1 = \tilde{r}$ and $r_2 = r_0$, we can readily deduce based on our assumptions:
\begin{equation}\label{uno'}
       (u')^2(r_{0}) \leq (u')^{2}(\tilde{r})  - (V_{0}(\tilde{r}) +\lambda^{2}) u^2(\tilde{r})
  \end{equation}
  and
\begin{equation}\label{due'}
    u^{2}(r_{0})   \leq \frac{(u')^{2}(\tilde{r}) - (V_{0}(\tilde{r}) +\lambda^{2}) u^2(\tilde{r})}{-(V_{0}(r_{0}) +\lambda^{2})} \leq  \frac{(u')^{2}(\tilde{r}) - (V_{0}(\tilde{r}) +\lambda^{2}) u^2(\tilde{r})}{k/r^2_0 -\lambda^{2}}\,.  
\end{equation}

\end{proof}

Let us now consider  $V_0(r)=\epsilon_0(r)$ in which case  $V_0(r)$ is infinitely differentiable and is of order $O(r^2)$ close to the origin as indicated in \eqref{E01}. Let us denote with $r_{min}$ the value of $r$ where $V_0(r)$ attains its only minimum $V_{min} <0$ (see the blue line in figure 1 in the next section). Being $V^{\prime}_{0}(r) \geq 0$ for any $r \geq r_{min}$, the bound \eqref{bound on u} will be valid for $r_{min} < r <r_0$.
According to \eqref{uno'}, \eqref{due'}, in order to have an estimate of $u(r_0)$ and $u' (r_0)$ we first need to estimate $u$ and $u^{\prime}$ in $r_{min}$. \\
\begin{lemma}\label{bound2}
The solution to \eqref{reducedeq} in $0\leq r \leq r_{min}$ with  initial conditions $u(0)=0\,,\,u'(0)= A \lambda$ satisfy  
\[ |u(r_{min})| \leq  A \lambda r_{min}     \quad    \quad  |u'(r_{min})| \leq A \lambda\]
\end{lemma}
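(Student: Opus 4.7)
The plan is to split $[0,r_{min}]$ at the crossover point $r^{*}\in(0,r_{min})$ where $V_0(r^{*})+\lambda^2=0$ and handle the two regimes separately: a Sturm comparison against the free equation $v''=\lambda^2 v$ on the inner ``hyperbolic'' piece $[0,r^{*}]$, and a concavity argument on the outer ``oscillatory'' piece $[r^{*},r_{min}]$. Since $V_0(r)=\epsilon_0(r)$ is smooth, non-positive on $[0,r_{min}]$, satisfies $V_0(r)=-r^2/16+O(r^3)$ near the origin by \eqref{E01}, is non-increasing on $[0,r_{min}]$ by definition of $r_{min}$, and attains its minimum $V_{min}<0$ at $r_{min}$, such a crossover exists for $\lambda$ small (the only regime of interest for Proposition \ref{pro1}, since $E_n\to 0$), and the quadratic behavior at the origin forces $r^{*}=O(\lambda)$.

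On $[0,r^{*}]$, where $0\leq V_0(r)+\lambda^2\leq\lambda^2$, I would compare $u$ with the free solution $v(r)=A\sinh(\lambda r)$ of $v''=\lambda^2 v$ sharing the same initial data. The Wronskian $W=u'v-uv'$ satisfies $W'=(V_0+\lambda^2-\lambda^2)uv\leq 0$ while $u,v\geq 0$, with $W(0)=0$, hence $(u/v)'=W/v^2\leq 0$ and $u/v\leq\lim_{r\to 0^{+}}u/v=1$, giving $0\leq u(r)\leq A\sinh(\lambda r)$. Integrating $u'(r)=A\lambda+\int_0^r(V_0(s)+\lambda^2)u(s)\,ds$ and using $V_0+\lambda^2\leq\lambda^2$ together with the bound just obtained for $u$ yields $A\lambda\leq u'(r)\leq A\lambda\cosh(\lambda r)$. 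At $r=r^{*}=O(\lambda)$ these read $u(r^{*})\leq A\lambda r^{*}(1+O(\lambda^2))$ and $u'(r^{*})\leq A\lambda(1+O(\lambda^2))$.

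On $[r^{*},r_{min}]$, $V_0+\lambda^2\leq 0$, so $u''=(V_0+\lambda^2)u$ is concave while $u\geq 0$ and convex while $u\leq 0$. While $u$ stays non-negative, concavity and the inner-region bounds give $u(r)\leq u(r^{*})+u'(r^{*})(r-r^{*})\leq A\lambda r$, and the monotonicity $u'\searrow$ gives $|u'(r)|\leq u'(r^{*})\leq A\lambda$. Should $u$ develop a first zero at some $r_1\in(r^{*},r_{min})$, I would propagate the bound past $r_1$ via the energy functional $E(r)=(u')^2-(V_0(r)+\lambda^2)u^2$, which satisfies $E'(r)=-V_0'(r)u^2\geq 0$ on $[0,r_{min}]$ since $V_0'\leq 0$ there; combined with the observation that $V_0(r_{min})+\lambda^2\leq 0$ implies $(u'(r_{min}))^2\leq E(r_{min})$, and with the inner-region estimate $E(r^{*})\leq (A\lambda)^2(1+O(\lambda^2))$, this controls $|u'(r_{min})|$ by $A\lambda$.

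The main obstacle is that the lemma is stated with equality in the leading coefficient, while the Sturm and concavity estimates naturally pick up the $O(\lambda^2)$ corrections inherited from the narrow hyperbolic sliver $[0,r^{*}]$. Since Lemma \ref{bound2} feeds into Lemma \ref{bound} only to derive the asymptotic behaviour $E_n\to 0$ of Proposition \ref{pro1}, such sub-leading corrections are irrelevant and can be absorbed into the constant $A$; if a clean strict inequality is desired, the strict negativity $V_0(r)+\lambda^2<\lambda^2$ in the bulk of $[0,r^{*}]$ (where $V_0(r)<0$ for every $r>0$) can be exploited to sharpen the Wronskian argument and recover the bounds exactly.
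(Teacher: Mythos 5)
Your overall strategy---splitting $[0,r_{min}]$ at the crossover point $r^{*}$ where $V_0+\lambda^2$ changes sign, comparing with the free equation on the inner piece, and exploiting sign and concavity information on the outer piece---is exactly the strategy of the paper's proof, and your Wronskian comparison with $A\sinh(\lambda r)$ on $[0,r^{*}]$ is in fact tighter than the mean-value-theorem estimates used there. The gap is in the outer region. From $u''=(V_0+\lambda^2)u\leq 0$ while $u\geq 0$ you correctly conclude that $u'$ is decreasing, but that gives only $u'(r)\leq u'(r^{*})$, not $|u'(r)|\leq u'(r^{*})$: on a concave arc $u'$ may become arbitrarily negative, so the bound on $|u'(r_{min})|$ does not follow from monotonicity alone. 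What is needed is to show that $u'$ never reaches zero on $[0,r_{min}]$; the paper does this through the Volterra representation $u'(r)=A\lambda+\int_0^r(V_0+\lambda^2)u\,ds$ together with the quantitative smallness of $|V_0(r_{min})|\,r_{min}$ for $V_0=\epsilon_0$ (checked from the explicit form of $\epsilon_0$). Your argument uses only the sign and monotonicity of $V_0$ on $[0,r_{min}]$, and for a general non-positive decreasing bounded $V_0$ the stated bounds are simply false (make the well deep enough on $[r^{*},r_{min}]$ and $u'$ turns negative and large), so some such quantitative input on $\epsilon_0$ is unavoidable.

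The fallback you propose for the case where $u$ develops a zero does not repair this; it runs the monotonicity backwards. You correctly compute $E'=-V_0'u^2\geq 0$ on $[0,r_{min}]$ (since $V_0'\leq 0$ there), but this means $E$ is \emph{non-decreasing}, so $E(r_{min})\geq E(r^{*})$, and combined with $(u'(r_{min}))^2\leq E(r_{min})$ this yields no upper bound on $|u'(r_{min})|$ whatsoever. The energy functional propagates upper bounds forward only where $V_0'\geq 0$, i.e.\ precisely on the interval $[r_{min},r_0]$ covered by Lemma \ref{bound}; on $[0,r_{min}]$ it is useless for your purpose, which is why the paper argues directly there. To close your proof, drop the energy step and instead bootstrap: as long as $u'>0$ your chord bound gives $|u(s)|\leq A\lambda s(1+O(\lambda^2))$, hence $|u'(r)-A\lambda|\leq\int_0^{r}|V_0(s)+\lambda^2|\,|u(s)|\,ds\lesssim A\lambda\,|V_0(r_{min})|\,r_{min}^2$, and the smallness of $|V_0(r_{min})|\,r_{min}^2$ for $V_0=\epsilon_0$ then forces $u'>0$ up to $r_{min}$, after which both estimates follow. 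The $O(\lambda^2)$ corrections you worry about at the end are indeed harmless: the paper itself only claims the bounds ``apart from terms of order $\lambda^2$''.
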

\begin{proof}
Let us first consider equation \eqref{reducedeq} in the interval $I_0 =  [0  , V^{-1}(\lambda^2)]$ where $(V(r) + \lambda^2) \geq 0$ and $u,u'$ and $u''$ have all the same sign. From the equation and from Lagrange theorem we get that there are $\xi_1$ and $\xi_2$ with $\,0 \leq \xi_1 \leq \xi_2 \leq V^{-1}(\lambda^2)$  such that the following equalities and inequalities hold in $I_0$
\begin{equation}
|u'(r)| = |A| \lambda + \lambda^2 |u(\xi_1)| \,V^{-1}(\lambda^2) \quad \quad  |u(\xi_1)| = |u'(\xi_2)| \,V^{-1}(\lambda^2) \quad \quad |u'(\xi_1)| \leq |u'(\xi_2)| \leq |u'(V^{-1}(\lambda^2)| 
\end{equation}
implying that, apart from terms of order $\lambda ^2$, we have $|u'(r)| = A \lambda  u(r)= 0$ for $0\leq r \leq V^{-1}(\lambda^2)$. We are then left to consider \eqref{reducedeq} in $ I_1 = [V^{-1}(\lambda^2) , r_{min}]$  where $(V(r) +\lambda^2) \leq 0$ assuming the same initial conditions we had in $0$.\\
In $I_1$ we have  $u(r) u''(r) \leq 0$ which implies that $|u'|$ is decreasing until $|u'(r)| \neq 0$. Between $0$ and the first zero of $u'$ there is a $\xi_3$ such that
\begin{equation}
|u'(r)| = |A |\lambda + (V(\xi_3)  + \lambda^2)|u(\xi_3)| \geq  |A| \lambda ( 1 - |V(r_{min})| \,r_{min})
\end{equation} 
A simple calculation (see figure 1 for the plot of  $\epsilon_0(r)$) shows that $|V(r_{min})| \,r_{min} << 1$ so that no zeros of $u'(r)$ occurs in $[0, r_{min}]$. We conclude that $|u'(r_{min})| \leq |A| \lambda$ and $|u(r_{min})| \leq |A| \lambda r_{min}$ apart for terms of order $0(\lambda^2)$.
\end{proof}
 Using bounds stated in lemma \ref{bound} with $\tilde{r} = r_{min}$ and \ref{bound2}  we obtain  that
\[|u(r_{0})| \leq C_1 \frac{\lambda r_0}{ \sqrt{k - r_0^2\lambda^{2}}}  \quad \quad \quad |u'(r_{0})| \leq C_2 \lambda \] 
with $C_1$ and $C_2$ positive real constants.\\
We have finally proved that, as expected, solutions of \eqref{reducedeq} in $0 \leq r \leq r_{0}$ with $V_{0} = \epsilon_0(r)$ have the same behaviour close to origin of the corresponding solutions when $V_{0} = 0$. As opposite to the solutions for $V_{0} = 0$  the ones with $V_{0} = \epsilon_0(r)$ oscillate according to the Sturm oscillation theorem (see e.g. \cite{ReSi}) but the estimates on the absolute values of the solution and of its derivative remain valid.\\

\noindent
We can finally determine the structure of the low energy eigenvalues  for  Hamiltonian $H$ \eqref{Ham}. Let us define the following sequence of potentials:
\begin{equation}\label{auxiliary potential}
    V_{m}(r) = \left\{ \begin{aligned}
          &\epsilon_0(r)   &\text{if} \quad & 0\leq r \leq r_m \equiv \sqrt{2}(\frac{3}{4} \pi + m \pi)\\
         -&\frac{W^2(1) }{r^2}  &\text{if} \quad & r>r_m
        \end{aligned}\;,\right.
\end{equation}
We can prove:

\begin{theorem} \label{thfinale}
The Hamiltonian $H$ \eqref{Ham} has an infinite number of negative eigenvalues accumulating at zero whose behaviour near zero is given by \eqref{asei}. In particular they satisfy the Efimov geometrical law \eqref{Efimov-geo}.
\end{theorem}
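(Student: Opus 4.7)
The plan is to realize $H$ as the operator-norm limit of the sequence $H_m:=-\Delta+V_m$ defined in \eqref{auxiliary potential} and to transfer to $H$ the conclusions already established for the $H_m$'s via Proposition \ref{pro1}. By the asymptotic expansion \eqref{E01} of $\epsilon_0(r)$ at infinity,
\[
\|V_m-\epsilon_0\|_{L^\infty(\rr^3)} \;=\; \sup_{r>r_m}\Big|\epsilon_0(r)+\frac{W^2(1)}{r^2}\Big| \;\leq\; C\,e^{-r_m/\sqrt{2}},
\]
so $H_m\to H$ in norm on the common domain $H^2(\rr^3)$, in particular in the norm-resolvent sense. The specific choice $r_m=\sqrt{2}(3\pi/4+m\pi)$ makes $V_m$ continuous at the junction $r=r_m$, since $g_0(r_m/\sqrt{2},1)=0$; on $[0,r_m]$ the function $V_m=\epsilon_0$ is $\mathcal{C}^1$, non-positive, and monotone non-decreasing on $[r_{\min},r_m]$, so the regularity hypotheses of Lemmas \ref{bound} and \ref{bound2} are met. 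The extension of Proposition \ref{pro1} discussed after its proof therefore applies to $H_m$ with $r_0=r_m$ and $k=W^2(1)>1/4$, producing in the $\ell=0$ sector an infinite sequence $\{E_n^{(m)}\}_{n\geq 1}$ of negative eigenvalues accumulating at zero and obeying \eqref{asei}.

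To transfer the infinite negative spectrum to $H$, I would invoke the min-max principle. The form bound $H\leq H_m+\|V_m-\epsilon_0\|_{L^\infty}I$ gives
\[
\mu_N(H) \;\leq\; \mu_N(H_m)+C\,e^{-r_m/\sqrt{2}} \;\leq\; E_N^{(m)}+C\,e^{-r_m/\sqrt{2}}
\]
for every $N\geq 1$ and every $m$, the second inequality using that the $N$-th smallest eigenvalue of $H_m$ is dominated by the $N$-th term $E_N^{(m)}$ of its Efimov tower. Since $|E_N^{(m)}|$ behaves like $r_m^{-2}$ times an $N$-dependent positive constant (read off from \eqref{asei}) and therefore decays only polynomially in $r_m$, whereas the tail error $Ce^{-r_m/\sqrt{2}}$ decays exponentially, for each fixed $N$ one may pick $m=m(N)$ so large that the right-hand side is strictly negative. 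Thus $\mu_N(H)<0$ for every $N$, and since $\sigma_{\mathrm{ess}}(H)=[0,\infty)$, $H$ has an infinite discrete sequence $\{E_n\}$ of negative eigenvalues accumulating at zero.

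The asymptotic \eqref{asei} for the $E_n$ and the Efimov geometric law \eqref{Efimov-geo} are finally obtained by a diagonal matching of scales. For each large $n$, choose $m=m(n)$ with $r_{m(n)}$ growing linearly in $n$ at sufficient slope that the tail error $Ce^{-r_{m(n)}/\sqrt{2}}$ is negligible compared with $|E_n|\sim e^{-2n\pi/\beta}$; norm-resolvent convergence of the spectral projections onto small windows around each eigenvalue then identifies $E_n$ with some eigenvalue $E_{k_n}^{(m(n))}$ of the Efimov tower of $H_{m(n)}$ to within error $o(|E_n|)$, and Proposition \ref{pro1} provides the asymptotic \eqref{asei} for $E_{k_n}^{(m(n))}$; passing to the quotient yields $\lim_n E_n/E_{n+1}=e^{2\pi/\beta}$. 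The main technical obstacle is the bookkeeping of the labelling together with uniformity: the Efimov ladder of each $H_m$ corresponds only to the \emph{upper} portion of the Efimov ladder of $H$, so one must identify the shift $k_n$ correctly and verify that the correction term $\zeta_k^{(m)}$ in \eqref{asei} remains uniformly $o(1)$ along the diagonal $(k_n,m(n))$. This uniformity reduces to the $O(\tau^2)$ estimate on the correction function $F$ in \eqref{bi}, whose implicit constants depend only on the $\mathcal{C}^1$-norm of $\epsilon_0$ and are therefore independent of $m$.
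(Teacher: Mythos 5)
Your proposal is correct and follows essentially the same route as the paper: the same approximating potentials $V_m$ of \eqref{auxiliary potential}, the same exponentially small tail bound on $V-V_m$ for $r>r_m$, the extension of Proposition \ref{pro1} applied to each $H_m$ with $r_0=r_m$ and $k=W^2(1)$, and a diagonal matching in $(n,m)$ of the exponentially small eigenvalue scales against the exponentially small perturbation. The one real difference is how the spectrum is transferred. The paper expands $(H-\xi)^{-1}-(H_m-\xi)^{-1}$ in a Neumann series to show that around each negative eigenvalue of $H$ there is a disc of radius $O(e^{-(m-2)\pi}/m^2)$ containing exactly one eigenvalue of $H_m$; you instead first secure the \emph{existence} of infinitely many negative eigenvalues via the min--max bound $\mu_N(H)\le E_N^{(m)}+Ce^{-r_m/\sqrt 2}$ and only afterwards invoke norm-resolvent convergence for the asymptotics. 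Your min--max step is arguably the cleaner way to obtain the infinitude, since the paper's estimate as written runs in the direction ``eigenvalues of $H$ lie near eigenvalues of $H_m$'', which by itself does not manufacture eigenvalues of $H$. You are also more explicit than the paper about the residual difficulty in extracting \eqref{asei} and \eqref{Efimov-geo}: the label shift $k_n$ between the ladders of $H$ and $H_{m(n)}$ and the uniformity of the corrections $\zeta^{(m)}_k$ along the diagonal. Neither proof fully discharges this point, but the reduction you indicate --- to the $m$-independence of the $O(\tau^2)$ constant in \eqref{bi} --- is the right one.
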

\begin{proof}
The multiplication operator $V_m$ is a small perturbation of $V$. In fact
\begin{equation}
   \Delta_m \equiv V(r) -V_{m}(r) =            
          \left\{ \begin{aligned}  
         & 0  \;\;\;    &\text{if} \quad & 0\leq r \leq r_m\\
          & \epsilon_0(r) + \frac{W^2(1) }{r^2}  \quad&\text{if} \quad & r>r_m
        \end{aligned}\;,\right.
\end{equation}
It is easy to check that the small oscillation of $\epsilon_0(r)$ around  $\frac{W^2(1) }{r^2}$ for a sufficiently large $m$ satisfies the bound 
\begin{equation}
  \left|  \epsilon_0(r)  + \frac{W^2(1) }{r^2} \right|  \leq \frac{e^{-m\pi}}{ m^2 }     \quad \quad \quad\text{if} \quad r>r_m
\end{equation}

Let us denote $H_m = -\Delta + V_m$. The operator sequence $H_m$ converge in the norm resolvent sense to $H$. Being all the eigen-spaces relative to the negative eigenvalues of $H_m$ and $H$ non-degenerate,  each eigenvalue $\lambda$ of $H$ is the limit of a sequence of eigenvalues $\lambda_m$ of  $H_m$. In fact  the Neumann series 
\begin{equation}
\frac{1}{H - \xi} - \frac{1}{H_m - \xi}  = \sum_{k=1}^{\infty} \frac{1}{H - \xi} \left(  \Delta_m \frac{1}{H - \xi} \right)^k  \quad \quad \text{when} \quad \xi \in \cc \setminus \rr.
\end{equation}
is norm convergent if $\imag \xi > \sup \Delta_m$.  If $\lambda$ is an eigenvalue of $H$ and we take $\xi = \lambda + i \epsilon$ with $\epsilon > 0$ and $\displaystyle \frac{e^{-m\pi}}{m^2 } < \epsilon$  we have 
\begin{equation}
||\frac{1}{H_m - \xi} - \frac{1}{\epsilon}||  \leq \frac{1}{\epsilon}\,\sum_{k=1}^{\infty}  \left(  \frac{e^{-m\pi}}{m^2 \,\epsilon} \right)^k  =  \frac{1}{\epsilon} \frac{e^{-m\pi}}{m^2 \,\epsilon - e^{-m\pi }}.
\end{equation}
Choosing e.g. $\displaystyle \epsilon = \frac{e^{-(m-1)\pi}}{m^2}$ we have
\begin{equation}
\frac{1}{\text{dist}(\sigma(H_m,), \xi )} = ||\frac{1}{H_m - \xi}||   \geq  \frac{1}{\epsilon}  - \frac{e^{-m\pi}}{m^2 \,\epsilon ^2 - e^{-m\pi }\epsilon} > m^2 e^{(m-2)\pi}\,\, .
\end{equation}
We conclude that it exists at least one $ \lambda_n \in \sigma_{disc} (H_m)$  such that
\[ |\lambda - \lambda_n|  \leq |\lambda - \xi| + |\xi - \lambda_n|  \leq \frac{ 2\,\, e^{-(m-2)\pi}}{m^2} \]
and that $\lambda_n$ is in fact the only one in the disc of radius $\epsilon$ around $\lambda$ for m sufficiently large.

\end{proof}

\section{The Born Oppenheimer approximation}
Following the suggestions given at the end of Section 2, we want to further investigate  the dynamics of a three-body system in the limit in which  it is possible to separate the slow dynamics of two heavy, non-interacting, bosons and the fast dynamics of a light particle interacting with the two bosons via zero-range forces.  It is worth mentioning that a rigorous proof that the model we are presenting here is the small mass ratio limit of the dynamics of a system of three particles interacting via contact interactions in three dimensions is still lacking. A sketch  of a tentative proof can be found in \cite{detlef}, where a large part of the results of this paper were outlined. The treatment followed here closely adheres to that of Fonseca et al. \cite {fons} with the difference that we use genuine zero-range interaction of infinite scattering length. In this way, pathologies connected with infinite attractive potentials at small distances do not appear.
\begin{figure}[h]
\centering
\begin{tikzpicture}
 \filldraw[black] (0,0) circle (0.001) node[right] {o};
  \filldraw[black, fill=gray] (0,-2) circle (0.5) node[below, shift={(0,-0.5)}] {1};
  \filldraw[black, fill=gray] (0,2) circle (0.5) node[above, shift={(0,0.5)}] {2};
  \filldraw[black, fill=gray] (-4,2.7) circle (0.2) node[left, shift={(0, 0.4)}] {3};
  
  \draw[->, >=stealth, line width=0.5pt] (0,-2) -- node[midway, below, sloped] {$x+\frac{R}{2}$} (-4,2.7) ;
  \draw[->, >=stealth, line width=0.5pt] (0, 2)  -- node[midway, above, sloped] {$x-\frac{R}{2}$} (-4,2.7);
  \draw[->, >=stealth, line width=1pt] (0,0) -- node[midway, right] {$-\frac{R}{2}$} (0,-2);
  \draw[->, >=stealth, line width=1pt] (0,0) -- node[midway, right] {$\frac{R}{2}$} (0,2);
  \draw[->, >=stealth, line width=1pt] (0,0) -- node[midway, above, shift={(0.2,-0.05)}, sloped] {$x$} (-4,2.7);

\end{tikzpicture}
\end{figure}


The Hamiltonian in the Jacobi coordinates formally reads (see e.g. \cite{fons})
\begin{equation} \label{BOhamil}
H = -\frac{1}{\mu} \triangle_{{R}}  - \frac{1}{\nu} \triangle_{{x}}  + \delta ({x}+ {R}/2) + \delta( {x} -{R}/2) 
\end{equation}
where 
\begin{equation}
 {x} = {x}_3 - \frac{1}{2} ({x}_1 +{x}_2), \;\;\; {R} = {x}_1 - {x}_2 ,\;\;\;   \nu = \frac{2M }{2M+m }, \;\;\;  \mu = \frac{M}{2m} \,. \nonumber 
\end{equation}
Comparison of the coefficients multiplying the kinetic energies in the  Hamiltonian \eqref{BOhamil} suggests that the fast dynamics of the light particle is generated by a two center point interaction Hamiltonian of the type introduced in the previous sections. \\
The mutual interaction between the two bosons acquired as a consequence of their common interaction with the light one is examined through the Born-Oppenheimer approximation.
In this  approximation the analysis of the eigenvalue problem for  the three body system is performed assuming eigenfunctions of the form   

\begin{equation}
\Psi ({x},{R}) = \psi ({x}; {R}) \Phi ({R}) \nonumber
\end{equation}
where $\psi ({x}; {R})$ is the solution of the time independent Schr\"odinger equation for the light particle depending parametrically on ${R}$
\begin{equation} \label{fast}
 \left[ - \frac{1}{\nu} \Delta_{{x}} + \delta({x} + {R}/2) + \delta ({x} - {R}/2) \right] \psi ({x}; {R}) = \epsilon(\left|R\right|) \psi ({x}; {R}) 
\end{equation}
and
 \begin{equation} \label{slow}
 \left[ - \frac{1}{\mu} \Delta_{{R}}  +  \epsilon(\left|R\right|) \right] \Phi ({R}) = E  \Phi ({R})
\end{equation}
where $E$ is the approximate eigenvalue of the three-body system. 

\begin{remark}
If the function $\epsilon(\left|R\right|)$ is computed for  $\alpha=0$  (see equation \eqref{eigeneq}) one finds that $\epsilon(\left|R\right|) \sim - W_0(1)^2/\nu R^2$ both for $R$ near zero and for $R$ tending to infinity. This turns out to imply that the system of the two bosons is unstable for the presence of eigenvalues  going to $-\infty$. In fact all the self-adjoint realizations of a Schr\"odinger operator  with potential  $- \gamma/R^2$, $\gamma > 1/4$, share this pathology \cite{Dere}. It is difficult not to notice the similarities with the instability problems discussed in section 2 relative to  boundary conditions with fixed scattering length. 
\end{remark}

Let us use instead the two-center point interaction Hamiltonian described in the previous section with $t_\theta = 1$. In this case the effective potential $\epsilon_0(R)$ turns out to be regular, bounded everywhere and decaying as $ - W_0(1)^2/ \nu R^2$ at infinity  (see figure below to compare two cases. For simplicity we considered $\nu=1$). When $W_0(1)^2/\nu > 1/4$ theorem \ref{thfinale} applies:  the energy eigenvalues are bounded from below and there are infinitely many low energy eigenstates with eigenvalues accumulating at zero energy. Moreover, they  satisfy the "Efimov scale".   
\begin {figure}[htbp] \nonumber
\centering
\includegraphics[width=15cm]{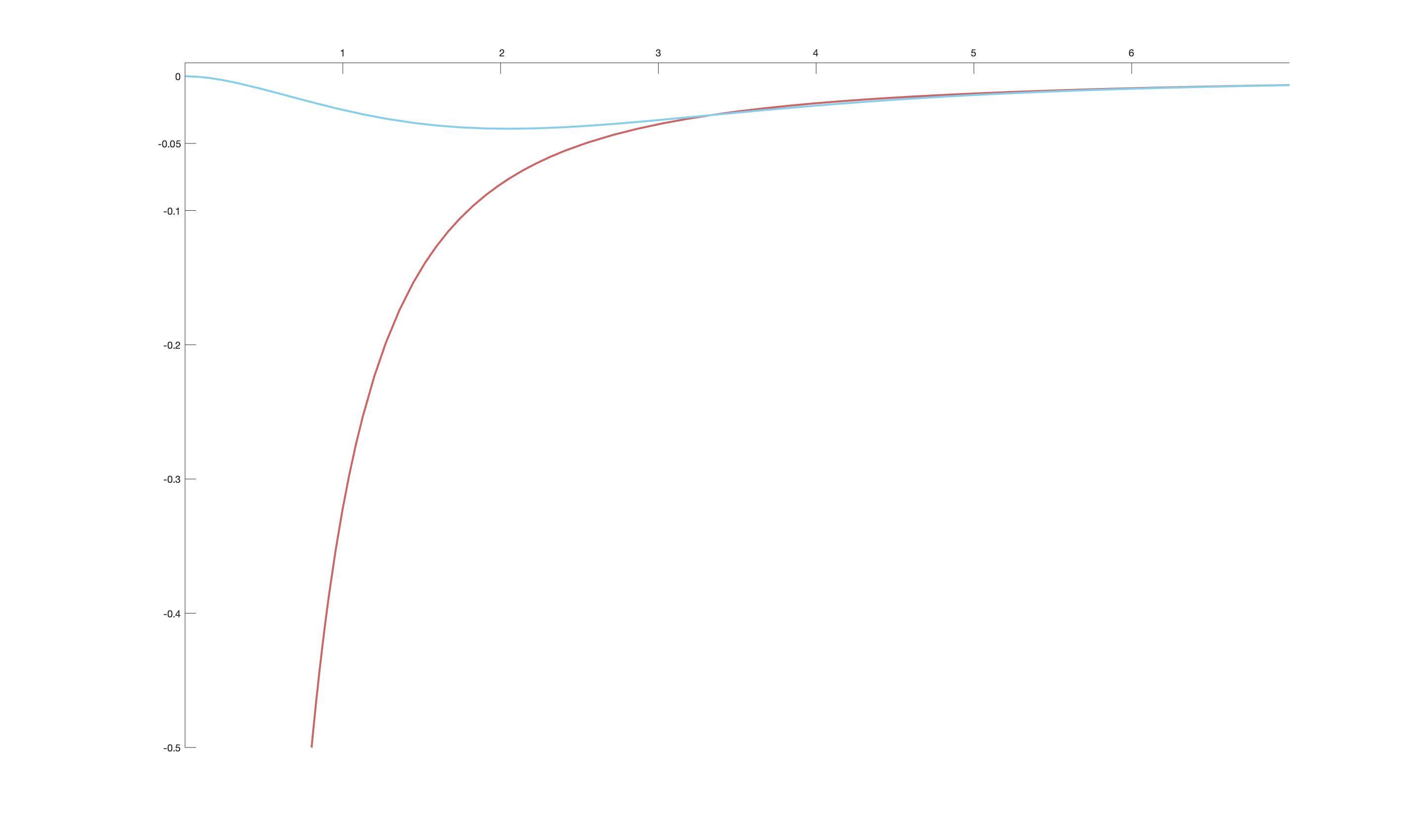}
\caption{\footnotesize {The light blue line represents $\epsilon_{0}(R)$. The lateritious line is the graphic of $\frac{- W^2(1) }{R^2}$. The two curves intersect for $R_{m} = \sqrt{2}(\frac{3\pi}{4} + m \pi),\,\,m=0,1\ldots$ and apart an exponentially decreasing oscillation they coincide  for $R > R_{1} $. }}
\end{figure}

\section{Conclusions}
We investigated a class of many-center point interaction Hamiltonians which had been considered non-relevant or non-physical by physicists and mathematical physicists in the past.  Our aim has been to challenge this view and suggest that these Hamiltonians may be interesting models of quantum systems of particles interacting via short-range interactions. In fact, we proved that the so-called ``non-local" point interactions do not show the problems of non-additivity that ``local" ones do. We also showed that the mechanism making non-local point interaction additive is very similar to the boundary condition renormalization used to avoid ultraviolet catastrophe in systems of three bosons interacting via short-range forces. 

\vspace{1cm}

\begin{appendix}
  \setcounter{equation}{0}
  \renewcommand{\theequation}{\thesection.\arabic{equation}}
   \section{Appendix}
  \label{appA}
\renewcommand{\thesection}{\Alph{section}}

\begin{proof}
As we mentioned at the beginning of section 3, the domain of $H_{\theta}$ in Fourier space is made of functions of the form    
\begin{eqnarray}
D(H_\theta) &=&\left\{f \in L^{2}(\rr^{3}) \left | \right. \hat{f} = \hat{f}_0 +C \left(\frac{1}{(2 \pi)^{3 / 2}} \frac{e^{-i p \cdot y_1}}{p^{2}-i}+ \frac{1}{(2 \pi)^{3 / 2}} \frac{e^{-i p \cdot y_2}}{p^{2}-i}\right) \right.\\
&& \quad \quad \quad \quad \quad \quad \left.+ \,C e^{i \theta}\left(\frac{1}{(2 \pi)^{3 / 2}} \frac{e^{-i p \cdot y_1}}{p^{2}+i} 
+ \frac{1}{(2 \pi)^{3 / 2}} \frac{e^{-i p \cdot y_2}}{p^{2}+i} \right) \right\} \nonumber
\end{eqnarray}
with $C \in \cc , \,\,\, f_{0} \in H^{2}(\rr^{3})$ and $f(y_{1}) = f(y_{2}) =0$.
For $\lambda >0$
$$\frac{1}{p^{2}\pm i}= \underbrace{\frac{\lambda \mp i}{\left(p^{2}\pm i\right)\left(p^{2} + \lambda \right)}}_{\in \mathcal{H}^2}+\frac{1}{p^{2}+\lambda}$$
and the domain can be written as
\begin{eqnarray} \label{domainlanda}
D(H_\theta)=\left\{f \in L^{2}(\rr^{3}) \left | \right. \hat{f} = \hat{f}_0 +C(1 + e^{i \theta}) \left(\frac{1}{(2 \pi)^{3 / 2}} \frac{e^{-i p \cdot y_1}}{p^{2}+ \lambda} +  \frac{1}{(2 \pi)^{3 / 2}} \frac{e^{-i p \cdot y_2}}{p^{2}+ \lambda}\right) \right. +\\ 
 \left.  \frac{C (e^{-i p \cdot y_1} + e^{-i p \cdot y_2}) }{(2 \pi)^{3 / 2}}  
+ \left[\frac{\lambda + i}{\left(p^{2}- i\right)\left(p^{2} + \lambda \right)}\right] + 
\frac{C e^{i \theta}(e^{-i p \cdot y_1} + e^{-i p \cdot y_2}) }{(2 \pi)^{3 / 2}}  \left[ 
\frac{\lambda - i }{\left(p^{2}+ i\right)\left(p^{2} + \lambda \right)}\right]   \right\} \nonumber
\end{eqnarray}

Multiplying  the $H^2$ part of $\hat{f}$ by $e^{i p\cdot y_1}$ and integrating we can compute the value of the regular part of $f$ in $y_1$. Taking into account that $f_0 (y_i) = 0 \,\,\,i=1,2\,\,$   we get
\begin{eqnarray}\label{regular} \nonumber
 f^\lambda_{\text{reg}}(y_1) & = &
C(1 + e^{i \theta}) \left( \frac{\sqrt{\lambda}}{4 \pi} - \frac{1}{1 + e^{i \theta}}\frac{\sqrt{-i}}{4 \pi} - \frac{e^{i \theta}}{ 1 + e^{i \theta}}\frac{\sqrt{i}}{4 \pi}  - \frac{e^{-\sqrt{\lambda} r}}{4 \pi r} +  \frac{1}{ 1 + e^{i \theta}}\frac{e^{-\sqrt{-i} r}}{4 \pi r}  + \frac{e^{i \theta}}{ 1 + e^{i \theta}} \frac{e^{-\sqrt{i} r}}{4 \pi r} \right) \\ \
& = & C(1 + e^{i \theta}) \left( \frac{\sqrt{\lambda}}{4 \pi} - \frac{e^{-\sqrt{\lambda} r}}{4 \pi r}  - (1-t_\theta)\frac{1}{4 \pi \sqrt{2}} + e^{-r/\sqrt{2}} ( \cos(r/\sqrt{2}) + t_\theta \sin(r/\sqrt{2}) \right)
\end{eqnarray}

where the anti-fourier transform was computed for $\real \sqrt{\pm i} > 0$ (see e.g. \cite{Er}).

Equation \eqref{domainlanda} implies that functions in $D(H_\theta)$ have the following behaviour close to the interaction points
\[ f(x) = C(1 + e^{i \theta}) \left(\frac{1}{|x - y_j |} - \frac{\sqrt{\lambda}}{4 \pi} +\frac{e^{-\sqrt{\lambda} r}}{4 \pi r} + O|x - y_j|  \right)  +  f^\lambda_{\text{reg}}(y_j)  \].
Using \eqref{regular} we conclude that
\[f(x) = \frac{q}{4 \pi |x-y_i|} + q \,\alpha(r) + O|x - y_j| \]
where we adopted the traditional notation $q$ for the coefficient of the singular term $q = C(1 + e^{i \theta})$.

Let $f \in D(H_\theta)$. Being $f^\lambda_{\text{reg}} \in \mathcal{H}^2$ we know that  $( - \Delta +\lambda) f^\lambda_{\text{reg}} \in L^2(\rr^3) $. From \eqref{resolvent2} we have
\begin{equation*}
    \left(H_\theta +\lambda \right)^{-1}(-\Delta + \lambda) f^\lambda_{\text{reg}}  = f =f^\lambda_{\text{reg}}  + \sum^{2}_{m,n =1}  \left[\Gamma_\theta(-\lambda)\right]^{-1}_{mn} \left({G_{-\lambda}(\cdot -y_m} \, {,} \, (-\Delta + \lambda) f^\lambda_{\text{reg}} \right) \, G_{-\lambda}(\cdot - y_n)
\end{equation*}
where $\lambda$ must be such that $\det \left[\Gamma_\theta(-\lambda)\right] \neq 0$.
On the other hand,  $\left({G_{-\lambda}(\cdot -y_m} \, {,} \, (-\Delta + \lambda) f^\lambda_{\text{reg}} \right)= f^\lambda_{\text{reg}} (y_m) = f^\lambda_{\text{reg}} (y_n) \equiv c$ . Therefore,
$$\left(H_\theta +\lambda\right)^{-1}(-\Delta - \lambda) f^\lambda_{\text{reg}} = f^\lambda_{\text{reg}}  + q \,\sum_{j=1}^{2} \, G_{\lambda}(\cdot -y_{j}) = f$$
with
\[ q =  c \,\sum^{2}_{n =1}  \left[\Gamma_\theta(-\lambda)\right]^{-1}_{mn} = c \, \sum^{2}_{m =1}  \left[\Gamma_\theta(-\lambda)\right]^{-1}_{mn} \]
or
\begin{equation}\label{q's-expression}
 \begin{pmatrix}
      c \\
     c
 \end{pmatrix}
 =
 \begin{pmatrix}
 [\Gamma_\theta (-\lambda)]_{11} & ( [\Gamma_\theta (-\lambda)]_{12}\\
[\Gamma_\theta (-\lambda)]_{21} &  [\Gamma_\theta (-\lambda)]_{22}     
\end{pmatrix}
      \times
     \begin{pmatrix}
        q\\
         q\\ 
    \end{pmatrix}
    \end{equation}
which is an alternative way to compute the values of the regular part of $f$ using the explicit form \eqref{Gamma-expression} of matrix $\Gamma_\theta (-\lambda)$.
\end{proof}
\end{appendix}

\newpage


\newpage

\end{document}